\documentclass{article}
\usepackage{fullpage}
\usepackage[utf8]{inputenc}
\usepackage{amsmath,amssymb,amsthm,color}
\usepackage{algorithm}
\usepackage{algpseudocode}
\usepackage{enumitem}
\usepackage{tikz}

\newtheorem{theorem}{Theorem}[section]
\newtheorem{definition}{Definition}[theorem]
\newtheorem{lemma}[theorem]{Lemma}

\newcommand{\OPT}{\mathsf{OPT}}
\newcommand{\eps}{\varepsilon}

\newcommand{\convexhull}{\mathcal{CH}}
\newcommand{\ball}{\mathsf{Ball}}
\newcommand{\noise}{\mathsf{Noise}}

\title{Learning Lines with Ordinal Constraints}

\author{
Bohan Fan\thanks{
Department of Computer Science, University of Illinois at Chicago, Chicago, IL, 60607, United States, E-mail: \{bfan4, dihara2, nmoham24, fsgher2, sidiropo, mvaliz2\}@uic.edu. Supported by NSF grants CCF-1815145, CCF-1934915, and by NSF CAREER award 1453472.
}
\and
Diego Ihara Centurion\footnotemark[1]
\and
Neshat Mohammadi\footnotemark[1]
\and
Francesco Sgherzi\footnotemark[1]
\and
Anastasios Sidiropoulos\footnotemark[1]
\and
Mina Valizadeh\footnotemark[1]
}

\begin{document}

\maketitle

\begin{abstract}
We study the problem of finding a mapping $f$ from a set of points into the real line, under ordinal triple constraints.
An ordinal constraint for a triple of points $(u,v,w)$ asserts that $|f(u)-f(v)|<|f(u)-f(w)|$.
We present an approximation algorithm for the dense case of this problem.
Given an instance that admits a solution that satisfies $(1-\eps)$-fraction of all constraints, our algorithm computes a solution that satisfies $(1-O(\eps^{1/8}))$-fraction of all constraints, in time $O(n^7) + (1/\eps)^{O(1/\eps^{1/8})} n$.
\end{abstract}

\section{Introduction}
Geometric methods provide several tools for the analysis of complicated data sets, such as nearest-neighbor search, clustering, and dimensionality reduction.
The key abstraction is to encode a set of objects by mapping each object to a point in some metric space, such that the distance between points quantifies the pairwise  dissimilarity between the corresponding objects.
The success of this paradigm crucially depends on the metrical representation used to encode the data.
Motivated by this fact, metric learning aims at developing methods for discovering an underlying metric space from proximity information (we refer the reader to \cite{shakhnarovich2005learning,kulis2013metric} for a detailed exposition).

There are several different formulations of the metric learning problem that have been considered in the literature.
Here, we focus on the popular case of \emph{ordinal} constraints.
In this case, the input consists of a set of points $X=[n]$, together with a set ${\cal T}$ of ordered triples $(u,v,w)$ of points, representing the fact that $u$ is \emph{more similar} to $v$ than to $w$.
The goal is to find a mapping $f:X\to Y$, for some host metric space $(Y,\rho)$, such that for all $(u,v,w)\in {\cal T}$, we have
\begin{align}
\rho(f(u),f(v)) < \rho(f(u),f(w)). \label{eq:con}
\end{align}
In general, there might be no mapping $f$ that satisfies all constraints of the form \eqref{eq:con}, so we are interested in the algorithmic problem of computing a mapping that minimizes the fraction of violated constraints.
We focus on the case where the host space is the real line, so the objective can be formulated as computing a mapping $f:[n]\to \mathbb{R}$, where for each $(u,v,w)\in {\cal T}$ we have the constraint
\begin{align}
|f(u) - f(v)| < |f(u) - f(w)|. \label{eq:constraint}
\end{align}
We refer to this problem as Line Learning with Ordinal Constraints (LLOC).


\subsection{Our contribution}
We present an approximation algorithm for learning a line metric space under ordinal constraints, for the case of dense instances.
Here, the density condition means that all ordinal information is given, i.e.~for any distinct $u,v,w\in [n]$, we have either $(u,v,w)\in {\cal T}$, or $(u,w,v)\in {\cal T}$.
Our main result is summarized in the following.

\begin{theorem}\label{thm:main}
There exists an algorithm that given an instance of LLOC that admits a solution satisfying $(1-\eps)$-fraction of all constraints, outputs a solution that satisfies $(1-O(\eps^{1/8}))$-fraction of all constraints, in time $O(n^7) + (1/\eps)^{O(1/\eps^{1/8})} n$.
\end{theorem}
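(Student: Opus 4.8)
The plan is to follow the standard template for approximation schemes on dense instances: first establish a \emph{structural characterization} of feasible solutions, then exploit density to \emph{sample a small set of pivots} and brute-force over their relative configuration, and finally show that a good solution on the sampled constraints extends to a good global solution. Concretely, I would start by analyzing the combinatorial structure of a mapping $f:[n]\to\mathbb{R}$: a solution is fully described, up to the constraints in \eqref{eq:constraint}, by (i) the left-to-right order of the points on the line, and (ii) for each point $u$, the ``bisector order'' — i.e.\ for each other point $u$, the sorted order of the remaining points by distance $|f(u)-f(\cdot)|$. The constraint for $(u,v,w)$ only depends on whether $v$ precedes $w$ in $u$'s distance order. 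The key structural lemma I would aim to prove is that if $f$ satisfies a $(1-\eps)$-fraction of all $\binom{n}{3}\cdot 2$ ordered constraints, then there is a ``canonical'' nearby solution $g$ whose distance orders are all consistent with a single linear order on $\mathbb{R}$ in a controlled way, losing only an $O(\eps^{c})$-fraction more constraints; this is where the exponent $1/8$ will ultimately come from, through a chain of $\eps^{1/2}$-type losses (averaging, Markov, triangle-inequality arguments on the positions).

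Next I would use the density assumption. Sample a set $S$ of $s = (1/\eps)^{O(1/\eps^{1/8})}$ points uniformly at random. For the near-optimal solution $f^*$, restricted to $S$ the induced instance is still dense, and by a Chernoff/union-bound argument the empirical fraction of violated constraints \emph{among triples touching $S$} is, with high probability, within $O(\eps^{1/8})$ of the true fraction — this is the usual ``dense implies sample-approximable'' phenomenon. I would then enumerate over all ``types'' of $f^*$ restricted to $S$: the relevant type is the ordering of $S$ on the line together with, for each $u\in[n]$, enough information about where $u$ falls relative to the images of $S$ and their pairwise bisectors. Since bisectors of the $\binom{s}{2}$ pairs in $S$ partition $\mathbb{R}$ into $O(s^2)$ intervals, each point $u\in[n]$ is classified by which interval it lands in; so the classification of any fixed $u$ given the type of $S$ takes $O(s^2)$ choices and can be done independently per point. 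The algorithm: guess the type of $f^*|_S$ (there are $s^{O(s)}$ such types, giving the $(1/\eps)^{O(1/\eps^{1/8})}$ factor), then for each $u\in[n]$ independently pick the interval that maximizes the number of satisfied constraints of the form $(u,v,w)$ or $(v,u,w)$ with $v,w$ having already-fixed relative positions against $S$. The $O(n^7)$ term should come from a separate exact/LP-based subroutine — most likely solving the problem optimally on the $O(1/\eps)$-size or $O(s)$-size ``core'' by enumerating all orderings and bisector-orderings there, or from computing, for the guessed type, the $\binom{n}{3}$-ish constraint counts via an $O(n^{?})$ aggregation; I would allocate $n^7$ to whichever dominates.

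The correctness argument then has two halves. (Soundness) Any solution the algorithm outputs is an actual mapping into $\mathbb{R}$: given the guessed order of $S$ and an interval assignment for each $u$, one can realize these positions by an explicit embedding, so the reported satisfied-constraint count is achievable. (Completeness) The type of the true near-optimal $f^*|_S$ is among those enumerated; conditioned on guessing it, the greedy per-point interval choice does at least as well on the $S$-touching constraints as $f^*$ does, and by the sampling lemma this transfers to a $(1-O(\eps^{1/8}))$-fraction bound on \emph{all} constraints — using the structural lemma to bound the gap between ``constraints decided by position relative to $S$'' and ``all constraints.''

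The main obstacle, I expect, is the structural lemma in the first paragraph: unlike clustering or correlation-clustering on dense instances, here the quantity being embedded (pairwise distance orders on a line) is not a local object, and two points close in distance-order can be forced far apart by constraints involving a third point. Controlling this requires showing that a near-feasible $f^*$ can be ``snapped'' onto a low-complexity representative — e.g.\ that most points can be moved to one of $O(\mathrm{poly}(1/\eps))$ canonical positions dictated by a small pivot set — while charging the newly violated constraints to the originally violated ones via a counting/exchange argument. Getting the loss here down to $O(\eps^{1/8})$ (rather than, say, $O(\eps^{1/c})$ for larger $c$) is the delicate quantitative heart of the proof, and I anticipate it is proved by iterating a few $\sqrt{\eps}$-loss reductions (bucketing positions, then bucketing bisectors, then an averaging step to fix a good bucketing), each justified by Markov's inequality on the per-point violation counts guaranteed by density.
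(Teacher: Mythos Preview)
Your proposal follows the generic ``sample-and-enumerate'' template for dense Max-CSPs, but this is not what the paper does, and the proposal has real gaps.

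The paper's algorithm uses no random sampling. Instead it (i) guesses a near-leftmost point $p$ in the optimal embedding (iterating over all $n$ choices); (ii) observes that the constraints $(p,i,j)\in{\cal T}$ induce a \emph{tournament} on $[n]\setminus\{p\}$ whose minimum feedback arc set has size at most $O(\eps^{1/2})\binom{n-1}{2}$ when $p$ is $(1-O(\eps^{1/2}))$-good; (iii) runs the Kenyon-Mathieu--Schudy PTAS for weighted-tournament FAS to obtain a topological order close to the optimal left-to-right order; (iv) partitions that order into $b=\Theta(\eps^{-1/8})$ equal contiguous \emph{buckets}; (v) contracts each bucket to a single point to form a weighted instance on $b$ points (the ``retraction''), which it solves exactly by enumerating cells of a hyperplane arrangement in $\mathbb{R}^b$ in time $b^{O(b)}=(1/\eps)^{O(1/\eps^{1/8})}$; and (vi) pulls the solution back by mapping every point in a bucket to the bucket's image. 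The $O(n^7)$ term is simply $n$ iterations of the $O(n^6)$ FAS routine, not an LP or aggregation step. The $\eps^{1/8}$ loss arises from a concrete chain: $\eps^{1/2}$ for the pivot's goodness and the FAS size, then $\eps^{1/4}$ for bounding the number of \emph{unstable} buckets (buckets whose points are not concentrated in a short interval under $f'$), and $\eps^{1/8}$ as the stability threshold itself, balanced against the choice $b=\Theta(\eps^{-1/8})$. The technical heart is showing that the number of \emph{brittle} triples of buckets --- triples for which collapsing each bucket to a point could flip the constraint either way --- is only $O(b^2)$; this is proved by a geometric argument about the plane $\{x-y=z-x\}$ meeting an axis-aligned grid of boxes, together with a discrete convexity property of brittleness along coordinate directions.

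Your proposal also has internal problems. The arithmetic does not close: you set $s=(1/\eps)^{O(1/\eps^{1/8})}$ and then claim $s^{O(s)}$ types give the stated time bound, but $s^{O(s)}$ is then doubly exponential in $1/\eps^{1/8}$; presumably you intended $s=\Theta(1/\eps^{1/8})$. More seriously, the ``greedy per-point interval choice'' cannot be carried out independently: a constraint $(u,v,w)$ with $u,v,w\notin S$ depends on all three assignments simultaneously, and knowing only which of the $O(s^2)$ bisector-cells of $S$ each point lies in does \emph{not} determine whether $|f(u)-f(v)|<|f(u)-f(w)|$, since the relevant bisector is between $v$ and $w$, neither of which is in $S$. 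You acknowledge this and defer it to a ``structural lemma,'' but that lemma is precisely the content of the theorem; the paper's substitute for it is the stability/brittleness machinery above, which is specific to the geometry of the line and is not a generic snapping or averaging argument.
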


\paragraph{Brief overview of our approach.}
The main idea used to obtain Theorem \ref{thm:main} is to first compute an ordering that is close to the ordering of the points in the optimal solution.
This is done by ``guessing'' a point $p^*$ that lies within the few left-most points in an optimal solution, and such that $p^*$ is not involved in many violated constraints. 
We show that the ordinal constraints involving $p^*$ can be used to order the points by first solving an instance of the Minimum Feedback Arc Set problem on a tournament, and then computing a topological ordering of the remaining acyclic graph.
We use this ordering to partition the points into ``buckets'', and
we show that for almost all buckets, almost all their points must be mapped inside an interval that does not contain many other points.
This property allows us to define a smaller instance of the problem by contracting each bucket into a single point.
This new smaller instance can be solved exactly, and its solution can be pulled back to the original problem.

\subsection{Related work}
\paragraph{Metric learning.}
Another popular formulation of the metric learning problem uses \emph{contrastive} constraints.
If this case, the input consists of a set of points $X=[n]$, together with sets ${\cal S}, {\cal D} \subseteq \binom{X}{2}$, where ${\cal S}$ contains pairs labeled as \emph{similar}, and ${\cal D}$ contains pairs labeled as \emph{dissimilar}.
The goal is to find a mapping $f:X\to Y$, for some host metric space $(Y,\rho)$, such that for all $\{u,v\}\in {\cal S}$,
\[
\rho(f(u),f(v)) \leq u,
\]
and for all $\{u,v\}\in {\cal D}$,
\[
\rho(f(u),f(v)) \geq \ell,
\]
for some given threshold values $u,\ell>0$.
This problem is easily seen to be a generalization of Correlation Clustering.
It has been for the case dense instances, when the host metric space is either Euclidean or a tree \cite{ihara2019algorithms}.
The main result of \cite{ihara2019algorithms} is a FPTAS for the case where there exists a mapping that satisfies all constraints, that is allowed to violate the constraints by a small multiplicative factor which is referred to as \emph{contrastive distortion}.
In contrast, in the present work, we do not introduce any distortion, and we do not need to assume that there exists a mapping satisfying all the constraints.

We also note that the case of arbitrary instances (i.e., not necessarily dense) under contrastive constraints has been studied for the setting of learning Mahalanobis metric spaces (i.e., when $X$ is a set of points in $d$-dimensional Euclidean space, and $f$ is required to be linear) \cite{maha}.
This version of the problem is related to the theory of LP-type problems.

\paragraph{Embedding into the line.}
The problem of computing a geometric representation of a data set into the real line has been studied extensively in various forms.
This is arguably the simplest instance of dimensionality reduction, which is also a prototypical unsupervised metric learning task.
Various objectives have been studied, including multiplicative \cite{nayyeri2015reality,DBLP:conf/soda/NayyeriR17,badoiu2005low,badoiu2005approximation,carpenter2018algorithms,fellows2009distortion}, additive \cite{badoiu2003approximation}, and average \cite{DBLP:journals/mst/DhamdhereGR06,rabinovich2003average} distortion.
We refer the read to \cite{DBLP:reference/cg/IndykMS17} for a detailed exposition.
A related notion is ordinal embeddings, where one seeks to obtain mappings that approximately preserve the relative ordering of pairwise distances \cite{alon2008ordinal,buadoiu2008ordinal}.
We remark that a key difference between these works and our result is that they seek to minimize the \emph{ordinal distortion}, which is a multiplicative factor of violation of the ordinal constraints, while we are interested in minimizing the number of violated ordinal constraints (without introducing ordinal distortion).

\paragraph{Betweenness.}
In the Betweenness problem we are given some set $X=[n]$ and a set ${\cal T}$ of ordered triples $(a,b,c)\in [n]^3$.
The goal is to find a bijection $g:[n]\to [n]$ such that for any $(a,b,c)\in {\cal T}$, $g(b)$ appears between $g(a)$ and $g(c)$.
This problem has been studied extensively in the literature.
It is known to be 
MAXSNP-hard \cite{chor1998geometric} (see also \cite{opatrny1979total}),
and remains hard to approximate even on dense instances \cite{ailon2007hardness}.
The case of tournaments has been shown to admit a PTAS \cite{karpinski2011approximation}, while the best approximation algorithm for general instances is the $1/3$-approximation obtained by taking a uniformly random ordering, assuming the Unique Games conjecture \cite{charikar2009every} (see also \cite{makarychev2012simple}).

The Betweenness problem is conceptually similar to the Line Learning with Ordinal Constraints problem studied here.
However, as we now explain, the two problems have some important differences.
A first difference is that the ordinal constraint \eqref{eq:constraint} does not imply any ordering constraint\footnote{For example,  the constraint $(u,v,w)$ is  satisfied by both solutions $f(u)=1,f(v)=2,f(w)=3$, and $f(u)=2,f(v)=1,f(w)=4$, however the former solution implies the ordering $f(u)<f(v)<f(w)$, while the latter implies $f(v)<f(w)<f(w)$}.
A second difference is that the solution space to the Line Learning with Ordinal Constraints problem that we study is larger.
In other words, the ordering of the points is not always enough to recover a nearly-optimal constraint.
For example, consider the instance on $X=\{0,2,4,\ldots,2k,2k+1,\ldots,3k\}$, with all constraints $(u,v,w)\in X^3$, such that $|u-v|<|u-w|$.
Clearly, setting $f$ to be the identity results in a solution that satisfies all constraints.
However, just the ordering of the points in $f$ is not enough to obtain a good solution: setting $g(u_i)=i$, where $g(u_1)<g(u_2)<\ldots<g(u_n)$ results in a solution $g$ that violates a constant fraction of all constraints.

\subsection{Organization}
The rest of the paper is organized as follows.
Section \ref{sec:warmup} presents, as a warm up, an exact polynomial-time algorithm for the case where there exists a solution that satisfies all constraints.
Section \ref{sec:alg} presents the algorithm for the general case.
Section \ref{sec:analysis} presents the analysis.
Section \ref{sec:brittle} gives the proof of a technical Lemma which is used in the proof of the main result.



\section{Warm up: An exact algorithm with no violations}\label{sec:warmup}

We now describe an exact polynomial-time algorithm for the case where there exists an optimal solution that satisfies all constraints.
This algorithm is significantly simpler than the one used to prove our main result.
However, it illustrates the main idea of using the constraints involving some point $p$ to deduce an ordering of all points, and then using this ordering to obtain an embedding into the line.
The algorithm is summarized in the following.

\begin{theorem}
There exists a polynomial-time algorithm which given an instance $([n], {\cal T})$ of the LLOC problem, either computes a mapping $f:[n]\to \mathbb{R}$ that satisfies all the constraints, or correctly decides that no such mapping exists.
\end{theorem}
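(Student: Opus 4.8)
The plan is to fix any point $p \in [n]$ and use the constraints of the form $(p, v, w)$ to recover the relative distances $|f(p) - f(v)|$ for all $v$. Since we only need to handle the realizable case, we may assume there exists a mapping $f^\star$ satisfying every constraint in ${\cal T}$. The key observation is that, by density, for every pair $v, w$ with $v, w \neq p$, exactly one of $(p, v, w)$ or $(p, w, v)$ lies in ${\cal T}$, and in a perfect solution this tells us precisely whether $|f^\star(p) - f^\star(v)| < |f^\star(p) - f^\star(w)|$ or the reverse. Hence the constraints involving $p$ as the first coordinate induce a total order $\prec_p$ on $[n] \setminus \{p\}$ by increasing distance from $f^\star(p)$ — provided the instance is realizable, this relation is a genuine total order (in particular transitive and antisymmetric), which can be checked; if it fails to be a total order for the chosen $p$, then no perfect solution exists. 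Sorting gives us, for each vertex, its ``rank'' $r_p(v) \in \{1, \dots, n-1\}$ measuring how far it is from $p$.

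Next I would reconstruct the left/right side information. Knowing the distance order from a single point $p$ is not enough to place points on the line, because a point at distance $d$ from $f^\star(p)$ could be on either side. To break this symmetry, pick a second reference point, e.g. the vertex $q$ that is $\prec_p$-maximal (farthest from $p$); without loss of generality $f^\star(q) > f^\star(p)$, which fixes the orientation. Now for any other vertex $v$, the pair of ranks $(r_p(v), r_q(v))$ together with the triple constraints among $\{p, q, v\}$ and a third movable point should determine on which side of $p$ the point $v$ falls, and more importantly should let us linearly order all of $[n]$: intuitively, $v$ comes before $w$ in the line order iff $f^\star(v) < f^\star(w)$, and this can be decided by consulting the constraint in ${\cal T}$ on the triple $(u, v, w)$ for a suitably chosen anchor $u$ whose position relative to both is already pinned down (for instance $u = p$ when $v, w$ are on the same side of $p$, and trivially when they are on opposite sides). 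Again, in the realizable case these pairwise comparisons are mutually consistent and yield a total order $<_L$ on $[n]$; inconsistency certifies infeasibility.

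Once we have the candidate left-to-right order $x_1 <_L x_2 <_L \dots <_L x_n$, the remaining task is purely one of assigning real coordinates $0 = t_1 < t_2 < \dots < t_n$ consistent with all the distance constraints. Every constraint in ${\cal T}$, once we know the ordering, becomes a strict inequality between two gaps $t_j - t_i$ (a linear inequality in the $t$'s), so the feasibility question is a linear program — in fact a system of strict linear inequalities — which can be solved in polynomial time; equivalently, since only $O(n)$ distinct differences appear, one can discretize and solve it combinatorially. If the LP is feasible we output the resulting $f$; if it is infeasible (or if any of the consistency checks above failed for the single fixed choice of $p$) we report that no perfect solution exists. Correctness in the ``no'' direction needs the argument that a realizable instance cannot fail any of these checks, which follows because $f^\star$ itself witnesses that $\prec_p$, the side assignment, and $<_L$ are all consistent with its own geometry.

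The main obstacle I anticipate is the middle step: extracting a globally consistent left-to-right ordering from the purely ``radial'' information available around $p$ (and $q$). Distances from a single point are inherently two-to-one, and one must argue carefully that the triple constraints supply exactly enough extra information to resolve the ambiguity for every vertex simultaneously, and that the resulting binary relation is actually transitive when the instance is realizable. Everything after that — the LP feasibility test and pulling back a solution — is routine, and the running time is dominated by sorting and solving a polynomial-size linear system, which is comfortably polynomial.
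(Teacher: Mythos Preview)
Your overall architecture---first recover the left-to-right order of the points, then encode every ordinal constraint as a linear inequality in the consecutive gaps and test feasibility of the resulting linear system---is exactly what the paper does. The difference lies entirely in how the line order is obtained, and this is precisely where your proposal has a real gap.

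You fix an \emph{arbitrary} anchor $p$ and then try to disambiguate left versus right using a second anchor $q$ and the rank pairs $(r_p(v),r_q(v))$. As you yourself flag, this step is not actually carried out: you say the information ``should'' suffice and that one ``must argue carefully,'' but no argument is given. There is in fact a one-line fix you nearly reach: the vertex $q$ you select (the $\prec_p$-maximal one, i.e.\ the point farthest from $p$) is necessarily an \emph{extreme} point of the configuration, because on a line the farthest point from any location is always one of the two endpoints. Hence $\prec_q$ by itself already \emph{is} the line order (up to a global reflection), and the whole rank-pair apparatus is unnecessary.

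The paper sidesteps the issue even more directly: rather than fixing an arbitrary $p$ and then manoeuvring toward an extreme point, it simply \emph{guesses} the leftmost point---i.e.\ runs the procedure once for each of the $n$ candidate values of $p$. When $p$ really is leftmost, every other point lies to its right, so the distance order $\prec_p$ coincides with the line order and one proceeds straight to the LP. This costs only a factor of $n$ in running time and eliminates the side-disambiguation problem completely. Your LP step is identical to the paper's.
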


\begin{proof}
Fix some optimal mapping $f^*:[n]\to \mathbb{R}$, that satisfies all constraints in ${\cal T}$. We guess $p=\arg\min\limits_{x \in [n]}f^*(x)$.
For all $i,j\in [n]$, let $d_{i,j}=|f^*(x_j)-f^*(x_i)|$.
We first determine the ordering of all the points on the real line, and then we compute the mapping using their distance constraints and solving some LP.

Suppose that $[n]=\{x_1,\ldots,x_n\}$, such that
\[
f^*(p)=f^*(x_1) < f^*(x_2) < \ldots < f^*(x_n).
\]
Since $\eps^*=0$, it follows that for all $i<j\in [n]$, we have
$d_{1,i}<d_{1,j}$, and $(1,i,j)\in {\cal T}$.
Therefore, for any $q,q'\in [n]$, we can decide whether $f^*(q)<f^*(q')$ or $f^*(q')<f^*(q)$ based on whether $(p,q,q')\in {\cal T}$ or $(p,q',q)\in {\cal T}$.
Therefore, we can compute the ordering  $x_1,\ldots,x_n$ of $[n]$ by running a sorting algorithm using pairwise comparisons.


We now compute a mapping using an LP.
For any $i<j\in \{1,\ldots,n\}$, we have
$|f^*(x_i)-f^*(x_j)| = \sum_{t=i}^{j-1}d_{t, t+1}$.
Therefore for each $(x_i,x_j,x_k)\in {\cal T}$, the constraint
$|f^*(x_i)-f^*(x_j)| < |f^*(x_i)-f^*(x_k)|$ can be written as
$\sum_{t=i}^{j-1}d_{t, t+1} < \sum_{t=i}^{k-1}d_{t, t+1}$.
Thus computing the desired mapping $f$ can be done by computing a feasible solution to the following LP:
\begin{align*}
  d_{i,j} \geq 0 & \text{ for all } i<j\in [n]\\
  \sum_{t=i}^{j-1}d_{t, t+1} < \sum_{t=i}^{k-1}d_{t, t+1} & \text{ for all } (x_i,x_j,x_k)\in {\cal T}
\end{align*}
This concludes the proof.
\end{proof}

\section{The algorithm for the general case}\label{sec:alg}

In this Section we present the algorithm for the general case of the problem.
The algorithm uses as a subroutine an exact algorithm for a generalized weighted version of the problem.
This exact algorithm is used on small instances that are constructed via a process which we refer to as a \emph{retraction}.

\subsection{Retractions}

We now define a weighted version of the metric learning problem, where each constraint is associated with some weight, and the goal is to maximize the total weight of all satisfied constraints.
Formally, an input to the Weighted Line Learning with Ordinal Constraints (WLLOC) problem is defined by a tuple $([b], {\cal T}, w)$, where $b\in \mathbb{N}$, and ${\cal T}$ are as before, and 
$w:{\cal T} \to \mathbb{R}$ is a weight function.
The goal is to find a solution $f:[b]\to [0,1]$ that minimizes the total weight of violated constraints.

\begin{theorem}\label{thm:weighted_exact}
There exists an exact algorithm for the  WLLOC problem with running time $O(n^{3n})$.
\end{theorem}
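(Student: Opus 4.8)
The plan is to argue that, although the solution space $f:[b]\to[0,1]$ is continuous, there are only finitely many ``combinatorially distinct'' solutions, and that a bounded-size enumeration suffices to find an optimal one. The key observation is that what matters about a solution $f$ is not the exact real values $f(1),\dots,f(b)$ but only (i) the linear order in which the points appear on the line, and (ii) for each point $u$, the relative order of the $b-1$ distances $|f(u)-f(v)|$ for $v\neq u$; equivalently, for each pair $(v,w)$, which of $f(v),f(w)$ is closer to $f(u)$. These comparisons are exactly what determine which constraints of $\mathcal{T}$ are satisfied, so the objective value of $f$ depends only on this combinatorial data.

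First I would fix a target permutation $\pi$ of $[b]$ describing left-to-right order; there are $b!$ of these, which we enumerate. Having fixed $\pi$, write the points in order as $x_1<x_2<\dots<x_b$ and let $g_i = f(x_{i+1})-f(x_i)>0$ be the $b-1$ consecutive gaps. Every distance $|f(x_i)-f(x_j)|$ is a sum of a contiguous block of gaps, so each constraint in $\mathcal{T}$ becomes a strict linear inequality (or its negation) among partial sums of the $g_i$'s. Thus the sign pattern of all these linear forms partitions the positive orthant $\{g\in\mathbb{R}^{b-1}: g_i>0\}$ into finitely many relatively open cells, and $f$'s objective value is constant on each cell. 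It therefore suffices, for each $\pi$, to find one representative gap-vector from each cell; we then evaluate the total violated weight on each representative and keep the global best, finally converting gaps back to a map into $[0,1]$ by a suitable rescaling. To enumerate cells cheaply, note there are at most $\binom{b}{2}$ distinct partial-sum values $S_{i,j}=g_i+\dots+g_{j-1}$, and the combinatorial type is determined by the total order of these $\binom{b}{2}$ quantities (together with $0$). So it suffices to enumerate all orderings of these $O(b^2)$ values; for each candidate ordering, solve an LP feasibility problem (as in the warm-up algorithm) to test realizability, and if feasible, read off the satisfied constraints. The number of such orderings is at most $(b^2)! = 2^{O(b^2\log b)}$, and combined with the $b!$ choices of $\pi$ this is dominated by $O(n^{3n})$ after absorbing the polynomial LP-solving cost; one can also obtain the bound more crudely by observing that assigning each of the $b$ points one of at most, say, $\binom{b}{2}+1 \le b^2$ ``rank profiles'' already yields at most $(b^2)^b = n^{O(n)}$ combinatorial types to check.

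Alternatively, and perhaps more cleanly for matching the stated $O(n^{3n})$ bound, I would discretize directly: I claim that if any solution achieves a given objective value, then some solution with all coordinates in the grid $\{0, 1/N, 2/N, \dots, 1\}$ with $N = n^{O(1)}$ (in fact $N=\binom{n}{2}$ suffices) achieves the same value, because one can take any realizing gap-vector, which lies in an open polyhedral cell cut out by $O(n^2)$ hyperplanes with $0/\pm1$ coefficients, and snap it to a nearby rational point of bounded denominator without crossing any hyperplane. Then brute-force over all $N^b \le n^{O(n)}$ such grid assignments, evaluate the violated weight of each in $O(|\mathcal{T}|)$ time, and output the minimizer. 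The running time is $N^b \cdot \mathrm{poly}(n) = O(n^{3n})$ for a suitable choice of the polynomial $N$.

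The main obstacle is making the finiteness/discretization claim rigorous with an explicit, small enough grid resolution $N$: one must show that the open cell containing an optimal gap-vector always contains a rational point whose denominator is polynomially bounded in $n$. This follows from a standard argument — the cell is defined by strict inequalities among $O(n^2)$ linear forms with coefficients in $\{-1,0,1\}$, so it is nonempty with ``fat'' interior after normalization, and a volume/rounding argument (or Cramer's rule on the vertices of the closure) yields the bound — but it is the step that needs care, and it is where the exponent in $O(n^{3n})$ really comes from. Everything else (enumerating grids or orderings, evaluating the objective, rescaling into $[0,1]$) is routine.
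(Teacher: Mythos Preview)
Your high-level picture is right and matches the paper: the objective depends only on which open cell of a hyperplane arrangement the vector $(f(1),\ldots,f(n))$ lies in, so one just has to enumerate cells. The difficulty is that neither of your two concrete enumeration strategies actually yields the stated $O(n^{3n})$ bound.

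In your first approach you write that the number of total orders of the $O(b^2)$ partial sums is $(b^2)!=2^{O(b^2\log b)}$ and then assert this is ``dominated by $O(n^{3n})$''. It is not: $n^{3n}=2^{3n\log n}$, which is far smaller. The fallback sentence about ``rank profiles'' is too vague to be checked and does not obviously encode enough information to determine, for each triple $(u,v,w)$, which of $|f(u)-f(v)|$ and $|f(u)-f(w)|$ is larger.

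The grid-discretization approach has a more serious problem: the claim that a grid of polynomial resolution $N=n^{O(1)}$ suffices is false. Consider points $x_1<\cdots<x_n$ with gaps $g_i=f(x_{i+1})-f(x_i)$ and impose, for each $i\ge 2$, the constraint $(x_i,x_1,x_{i+1})\in\mathcal T$, i.e.\ $|f(x_i)-f(x_1)|<|f(x_i)-f(x_{i+1})|$. In gap space this says $g_i>g_1+\cdots+g_{i-1}$, which forces $g_i>2^{i-2}g_1$; any solution in $[0,1]$ therefore has $g_1<2^{-(n-2)}$, so every grid that meets this cell must have $N\ge 2^{n-2}$. With $N$ exponential, $N^n$ already exceeds $n^{3n}$. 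So the ``snap to a nearby rational point of bounded denominator'' step genuinely fails here; Cramer's rule gives denominators up to an $(n-1)\times(n-1)$ determinant, which for $\{0,\pm1\}$ matrices can be exponential.

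The paper avoids both issues by counting cells directly rather than via orderings or grids. Each constraint $|f(i)-f(j)|<|f(i)-f(k)|$ is bounded by a constant number of hyperplanes in $\mathbb R^n$ (namely $f(j)=f(k)$ and $f(j)+f(k)=2f(i)$), so the whole instance contributes $O(|\mathcal T|)=O(n^3)$ hyperplanes. An arrangement of $a$ hyperplanes in $\mathbb R^n$ has $O(a^n)$ cells, hence $O((n^3)^n)=O(n^{3n})$ cells here; one enumerates them (picking a witness point per cell via an LP, exactly as in your warm-up intuition) and evaluates the weighted objective on each. This is the missing ingredient: replace your ordering/grid enumeration by the standard hyperplane-arrangement complexity bound.
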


\begin{proof}
We identify the space of possible solutions with $[0,1]^n$, by mapping each  solution $f:[b]\to [0,1]$  to the vector $x_f = (f(1),\ldots,f(n)) \in [0,1]^n$.
For any $(i,j,k)\in {\cal T}$, we have the constraint
\[
|f(i)-f(j)| < |f(i)-f(k)|.
\]
The feasible region for this constraint is thus defined as a union of certain cells in an arrangement $A_{(i,j,k)}$ of a constant number of open halfspaces in $\mathbb{R}^n$.
Let $A$ be the arrangement obtained as the union of all halfspaces for all $(i,j,k)\in {\cal T}$.
It is known that any arrangement of $a$ halfspaces in $\mathbb{R}^b$ has complexity $O(a^b)$ (see \cite{toth2017handbook} and references therein), and thus $A$ has complexity $O(|{\cal T}|^n) = O(n^{3n})$.
By enumerating all the cells in this arrangement, we find a  solution that satisfies a set of constraints of maximum total weight, which results in an algorithm with running time $O(n^{3n})$.
\end{proof}

As mentioned earlier, the exact algorithm from Theorem \ref{thm:weighted_exact} will be used as a subroutine on smaller instances.
The following Definition describes a process for mapping large unweighted instances to smaller weighted ones.

\begin{definition}[Retraction]
Given an instance $\phi=([n], {\cal T})$ of the LLOC problem, and some partition  ${\cal B}=\{B_1,\ldots,B_b\}$ of $[n]$, we define the ${\cal B}$-retraction of $\phi$ to be the instance $\phi'=([b], {\cal T}', w)$
of the WLLOC problem
where
for any $(i,j,k)\in {\cal T}'$, we have 
\[
w((i,j,k)) = \left|{\cal T} \cap (B_i\times B_k\times B_j)\right|.
\]
\end{definition}

\subsection{The algorithm}
The last ingredient we need is an approximation algorithm for the Minimum Feedback Arc Set problem on tournaments, which is summarized in the following.

\begin{theorem}[Kenyon-Mathieu \& Schudy \cite{kenyon2007rank}]\label{thm:FAS}
There exists a randomized algorithm for the Minimum Feedback Arc Set problem on weighted tournaments. Given $\epsilon >0$, it outputs a solution with expected cost at most $(1+\epsilon)\OPT$. The expected running time is $O(1/\eps) n^6 + 2^{\tilde{O}(1/\eps)} n^2 + 2^{2^{\tilde{O}(1/\eps)}} n$.
\end{theorem}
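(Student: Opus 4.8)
The plan is to establish a stronger, \emph{additive} guarantee --- a ranking whose expected cost is at most $\OPT + O(\eps)W$, where $W = \sum_{\{i,j\}}(w_{ij}+w_{ji})$ is the total weight --- and then to combine it with a divide-and-conquer recursion via a dichotomy on whether $\OPT$ is a constant fraction of $W$. First compute a cheap constant-factor approximation $\pi_0$: the pivot-based algorithm (``KwikSort''), which repeatedly picks a uniformly random pivot and recurses on the vertices it beats versus those it loses to, returns in expected polynomial time a ranking with $\mathbb{E}[\mathrm{cost}(\pi_0)] \le c\,\OPT$ for an absolute constant $c$. If $\OPT \ge \delta W$ for a suitable $\delta = \delta(\eps)$, then the additive scheme alone already yields cost at most $\OPT + O(\eps)W \le (1+O(\eps/\delta))\OPT$, so running it with parameter $\eps\delta$ suffices; the real work is the regime $\OPT < \delta W$.

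The additive scheme is exhaustive sampling in the style of additive approximation schemes for dense constraint satisfaction problems. Fix an unknown optimal order $\pi^*$ and partition the $n$ rank positions into $O(1/\eps)$ consecutive blocks of balanced weight. Draw a sample $S$ of $O(\mathrm{poly}(1/\eps))$ vertices uniformly at random, and enumerate all $2^{\mathrm{poly}(1/\eps)}$ possibilities for the restriction of $\pi^*$ to $S$ together with the block of each $s\in S$. For a given guess and each vertex $v\notin S$, estimate, for every candidate block $b$, the cost of placing $v$ in $b$ by the scaled weighted count $\frac{n}{|S|}\big(\sum_{s\in S:\,\mathrm{block}(s)<b} w_{vs} + \sum_{s\in S:\,\mathrm{block}(s)>b} w_{sv}\big)$, assign $v$ to the minimizing block, and order vertices within a common block by recursing (their total internal weight is $O(\eps)W$ by the balancing). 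Since only an \emph{expected} guarantee is required, linearity of expectation replaces any union bound, so the sample --- hence the number of guesses --- stays $2^{\mathrm{poly}(1/\eps)}$ independent of $n$, and a Hoeffding bound gives expected placement error $O(\eps)W_v$ per vertex $v$, hence $O(\eps)W$ in total; taking the best ranking over all guesses does at least as well as the correct guess.

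For the regime $\OPT < \delta W$ we recurse using $\pi_0$ as a guide: split the current vertex set at the midpoint of $\pi_0$ into a left part $L$ and a right part $R$, rank each recursively, and concatenate. The output cost is (cost inside $L$) + (cost inside $R$) + (weight of the cut pairs, kept in the $L$-before-$R$ orientation). A charging argument bounds the total excess over $\OPT$ accrued at all cuts of the recursion tree by $O(\eps)\,\OPT$: each cut disagreement is charged to a pair that $\pi_0$ inverts relative to $\pi^*$, and since $\pi_0$ is a $c$-approximation with $\OPT$ small, such pairs are few and essentially disjoint across the $O(\log n)$ levels. The recursion bottoms out at subinstances of size $2^{\mathrm{poly}(1/\eps)}$, solved exactly by brute-force enumeration of all $\big(2^{\mathrm{poly}(1/\eps)}\big)! = 2^{2^{\tilde{O}(1/\eps)}}$ orderings; with $n/2^{\mathrm{poly}(1/\eps)}$ leaves this contributes the $2^{2^{\tilde{O}(1/\eps)}}n$ term, while the guided recursion together with computing $\pi_0$ accounts for the $O(1/\eps)\,n^6$ and $2^{\tilde{O}(1/\eps)}\,n^2$ terms.

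The main obstacle is the charging argument that prevents the recursion from losing a $\log n$ factor: a priori each of the $\Theta(\log n)$ levels could add $\Theta(\OPT)$, and only by exploiting that near-optimal rankings are \emph{robust} --- when $\OPT \ll W$, few vertices sit far from their $\pi_0$-position --- does the total cut cost collapse to $O(\eps\,\OPT)$. The second delicate point is calibrating $\delta$, the sample size, the block count, and the split of the $\eps$-budget across recursion levels so that the additive scheme's $O(\eps)W$ error and the recursion's $O(\eps\,\OPT)$ error interlock into a single $(1+\eps)$ factor without pushing the guess enumeration past $2^{\mathrm{poly}(1/\eps)}$.
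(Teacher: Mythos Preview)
The paper does not prove this theorem at all: Theorem~\ref{thm:FAS} is stated as an imported result from Kenyon-Mathieu and Schudy~\cite{kenyon2007rank} and is used purely as a black box (in fact only the $O(1)$-approximation case is invoked, in Step~2 of the algorithm and in the proof of Lemma~\ref{lem:unstable_N}). There is therefore no proof in the paper to compare your proposal against.

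Your sketch is a plausible reconstruction of the \emph{original} Kenyon-Mathieu--Schudy argument --- the additive PTAS via exhaustive sampling into $O(1/\eps)$ blocks, combined with a recursion guided by a constant-factor approximation to handle the sparse-$\OPT$ regime --- but verifying its correctness would require checking it against~\cite{kenyon2007rank}, not against the present paper. For the purposes of this paper no proof is needed here; a citation suffices.
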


We are now ready to describe the general algorithm.
Let ${\cal T}_n$ denote the set of all ordered triples of distinct elements in $[b]$.
Recall that the input consists of a set ${\cal T}\subseteq {\cal T}_n$, such that for any set of distinct $i,j,k\in [b]$, we have that exactly one of the triples $(i,j,k)$ and $(i,k,j)$ is contained in ${\cal T}$.

The algorithm proceeds in the following steps:

\begin{description}
\item{\textbf{Step 1: Exhaustively computing a left-most point.}}
Iterate Steps 2--5 for all values $p\in [b]$.

\item{\textbf{Step 2: Cycle removal.}}
Construct a tournament $G^{(p)}=([b], A^{(p)})$, where
\[
A^{(p)} = \{(i,j) : (p, i, j) \in {\cal T}\}.
\]
Compute an $O(1)$-approximate minimum feedback arc set, $F^{(p)}\subset A^{(p)}$, in $G^{(p)}$, using the algorithm in Theorem \ref{thm:FAS}.

\item{\textbf{Step 3: Ordering.}}
Compute a topological ordering $z_{1}^{(p)},\ldots,z_{n}^{(p)}$ of $G^{(p)}\setminus F^{(p)}$.

\item{\textbf{Step 4: Retraction.}}
Let $b=O(\eps^{-1/8}))$.
For any $i\in [b]$,
let
\begin{align*}
{\cal B}_i^{(p)} &= \bigcup_{j=(i-1)n/b+1}^{in/b} \{z_j^{(p)}\}.
\end{align*}
Let $\psi^{(p)}$ be the ${\cal B}^{(p)}$-retraction of $\phi^{(p)}$.

\item{\textbf{Step 5: Extension.}}
Using the algorithm from Theorem \ref{thm:weighted_exact}, we compute an optimal solution $g:[b]\to [0,1]$ for the instance $\psi^{(p)}$ of WLLOC.
We define $f^{(p)}:[b]\to [0,1]$ by setting for any $i\in [b]$, $f^{(p)}(i)=g^{(p)}(j)$, where $j\in [b]$ such that $i\in B_j^{(p)}$.
The algorithm outputs the solution $f^{(p)}$.

\item{\textbf{Step 6:}}
Return the best solution found among $f^{(1)},\ldots,f^{(n)}$.
\end{description}

This completes the description of the algorithm.

\section{Analysis of the algorithm}
\label{sec:analysis}

This Section presents the analysis of the algorithm, which is the proof of Theorem \ref{thm:main}.

For the remainder of the analysis, let us fix some optimal solution $f_\OPT:[b]\to [0,1]$ for the instance $([b], {\cal T})$ of the LLOC problem.
Fix a numbering $\{x_1,\ldots,x_n\}=[b]$, such that
\[
f_\OPT(x_1) \leq f_\OPT(x_2) \leq \ldots \leq f_\OPT(x_n).
\]

For any $f:[b]\to [0,1]$, 
for any $i\in [b]$, and for any $\alpha\in [0,1]$, 
we say that $i$ is \emph{$\alpha$-good in $f$}, if at least $\alpha$-fraction of the constraints of the form $(i,j,k)\in {\cal T}$ are satisfied; i.e.:
\[
\left|\left\{(i, j, k)\in {\cal T} : |f(i)-f(j)|<|f(i)-f(k)| \right\} \right| \geq \alpha \binom{n-1}{2}.
\]

We first argue that there exists a $(1-\eps^{1/2})$-good point that is close to the left-most point in the optimal solution:

\begin{lemma}\label{lem:good_exists}
There  exists $i^*\in [2\eps^{1/2} n]$, such that $x_{i^*}$ is $(1-\eps^{1/2})$-good in $f_\OPT$.
\end{lemma}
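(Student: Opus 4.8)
The plan is to use a counting (averaging) argument over the total number of violated constraints in $f_\OPT$. Since $f_\OPT$ satisfies a $(1-\eps)$-fraction of all constraints, and the total number of constraints is $|{\cal T}| = \binom{n}{3}$ (one of $(i,j,k)$, $(i,k,j)$ per triple... actually each unordered triple $\{i,j,k\}$ contributes three constraints, one per choice of the ``anchor'' $i$, so $|{\cal T}| = 3\binom{n}{3} = n\binom{n-1}{2}$), the number of violated constraints in $f_\OPT$ is at most $\eps \cdot n\binom{n-1}{2}$. For each $i\in[b]$, let $\mathrm{viol}(i)$ denote the number of constraints of the form $(i,j,k)\in{\cal T}$ violated by $f_\OPT$; then $\sum_{i\in[b]} \mathrm{viol}(i) \le \eps\, n\binom{n-1}{2}$. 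A point $x_i$ fails to be $(1-\eps^{1/2})$-good exactly when $\mathrm{viol}(i) > \eps^{1/2}\binom{n-1}{2}$, so by Markov's inequality the number of points that are \emph{not} $(1-\eps^{1/2})$-good is at most $\eps\, n\binom{n-1}{2} / (\eps^{1/2}\binom{n-1}{2}) = \eps^{1/2} n$.

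Next I would restrict attention to the prefix $\{x_1,\ldots,x_{\lceil 2\eps^{1/2} n\rceil}\}$ of the $2\eps^{1/2} n$ left-most points in the optimal ordering. This set has size at least $2\eps^{1/2} n$, which strictly exceeds $\eps^{1/2} n$ — the total number of not-good points in all of $[b]$. Hence the prefix must contain at least one point $x_{i^*}$ with $i^*\in[2\eps^{1/2} n]$ that is $(1-\eps^{1/2})$-good in $f_\OPT$, which is exactly the claim. (One should take $i^*$ to range over $[\lceil 2\eps^{1/2} n\rceil]$; I will assume, as is standard in such arguments and consistent with the rest of the paper, that $\eps^{1/2} n \ge 1$ so that these quantities are meaningful, and absorb ceilings into the $O(\cdot)$ notation in the statement of the lemma.)

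I do not expect any serious obstacle here; the only points requiring care are (i) getting the normalization of $|{\cal T}|$ right — confirming that each of the $\binom{n}{3}$ triples yields exactly one constraint per anchor, so that $\sum_i \mathrm{viol}(i)$ telescopes to the global violation count with the factor $n$ in front — and (ii) the edge cases when $\eps^{1/2} n$ is small or not an integer, handled by the $2\eps^{1/2} n$ slack built into the statement. The heart of the argument is just Markov's inequality applied to the per-anchor violation counts, combined with the pigeonhole observation that $2\eps^{1/2} n > \eps^{1/2} n$.
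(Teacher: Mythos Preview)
Your argument is correct and is essentially the same counting/averaging argument the paper gives: the paper phrases it as a direct contradiction (if all $2\eps^{1/2} n$ prefix points were bad, summing their anchor-violations would exceed the global budget $\eps n\binom{n-1}{2}$), while you phrase it as Markov's inequality followed by pigeonhole, but the underlying computation is identical. Your observation that $\sum_i \mathrm{viol}(i)$ equals the total number of violations (since each constraint has a unique anchor) is exactly the normalization the paper uses.
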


\begin{proof}
Let $\xi$ be the total number of constraints violated by $f_\OPT$.
We have 
$\xi \leq \eps \cdot |{\cal T}| = \eps n \binom{n-1}{2}$.
Suppose that there exists no $i\in [2 \eps^{1/2} n]$ such that $x_i$ is $(1-\eps^{1/2})$-good.
Therefore every $i\in [2\eps^{1/2} n]$ participates in at least $\eps^{1/2} \binom{n-1}{2}$ violated constraints of the form $(i,j,k)$, for some $j,k\in [b]$.
Thus the total number of violated constraints is at least
$\xi \geq 2 n \eps \binom{n-1}{2}$, which is a contradiction, concluding the proof.
\end{proof}

For the remainder of this section, fix some $i^*\in [2\eps^{1/2} n]$, such that $x_{i^*}$ is $(1 - \eps^{1/2})$-good, as in Lemma \ref{lem:good_exists}.
Let $f'$ be the embedding obtained from $f_\OPT$ by exchanging the images of $x_1$ and $x_{i^*}$, that is for all $i\in [b]$,
\[
f'(x_i) = \left\{\begin{array}{ll}
f_\OPT(x_{i^*}) & \text{ if } i=1\\
f_\OPT(x_{1}) & \text{ if } i=i^*\\
f_\OPT(x_i) & \text{ otherwise}
\end{array}\right.
\]
We next show that $f'$ is near-optimal.

\begin{lemma}\label{lem:f_prime_violations}
The total number of violated constraints in $f'$ is at most $(\eps+O(1/n)) n \binom{n-1}{2}$.
\end{lemma}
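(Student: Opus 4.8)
The plan is to exploit the fact that $f'$ is obtained from $f_\OPT$ by a single transposition of images: $f'$ and $f_\OPT$ agree on every point of $[b]$ except $x_1$ and $x_{i^*}$, where the two values are swapped. Hence, for any constraint $(i,j,k)\in{\cal T}$ none of whose three points is $x_1$ or $x_{i^*}$, we have $f'(i)=f_\OPT(i)$, $f'(j)=f_\OPT(j)$, and $f'(k)=f_\OPT(k)$, so such a constraint is violated by $f'$ if and only if it is violated by $f_\OPT$. Therefore the set of constraints whose satisfaction status changes between $f_\OPT$ and $f'$ is contained in the set of constraints that involve $x_1$ or $x_{i^*}$, and in particular the number of constraints violated by $f'$ is at most the number violated by $f_\OPT$ plus the number of constraints incident to $x_1$ or $x_{i^*}$.

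The next step is to bound the number of constraints incident to a fixed point $p\in[b]$. A constraint touching $p$ is of one of two kinds: either $p$ is its anchor (first coordinate), in which case it is determined by an unordered pair from the remaining $n-1$ points since exactly one of $(p,j,k)$ and $(p,k,j)$ lies in ${\cal T}$, giving $\binom{n-1}{2}$ such constraints; or $p$ occupies one of the last two coordinates, in which case it is determined by the choice of anchor $i\neq p$ ($n-1$ choices) together with the third point ($n-2$ choices), again with exactly one admissible ordering, giving $(n-1)(n-2)$ such constraints. Hence at most $\binom{n-1}{2}+(n-1)(n-2)=O(n^2)$ constraints touch $p$, and at most $O(n^2)$ touch $x_1$ or $x_{i^*}$.

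Combining the two steps, the number of constraints violated by $f'$ is at most $\eps\,|{\cal T}| + O(n^2)$. Since the density condition forces $|{\cal T}| = 3\binom{n}{3} = n\binom{n-1}{2} = \Theta(n^3)$, the additive term satisfies $O(n^2) = O(1/n)\cdot n\binom{n-1}{2}$, which yields the claimed bound $\left(\eps + O(1/n)\right) n\binom{n-1}{2}$.

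The only place that requires any care is the incidence count in the second step — namely avoiding double counting and correctly invoking the tournament-type density condition (exactly one of each complementary ordered pair belongs to ${\cal T}$). I would also remark that, somewhat counterintuitively, the $(1-\eps^{1/2})$-goodness of $x_{i^*}$ from Lemma~\ref{lem:good_exists} is not needed for this lemma at all; the bound follows purely from the locality of the transposition. Goodness, together with the bound $i^*\le 2\eps^{1/2}n$, will instead be used in the later steps, to control the feedback arc set of the tournament built from $x_{i^*}$ and the error introduced by the bucketing/retraction.
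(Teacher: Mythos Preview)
Your proposal is correct and follows essentially the same route as the paper: both arguments observe that only constraints touching $x_1$ or $x_{i^*}$ can change status under the transposition, bound the number of such constraints by $O(n^2)$, and add this to the at most $\eps\,n\binom{n-1}{2}$ constraints already violated by $f_\OPT$. Your incidence count is slightly more careful than the paper's (which simply quotes $6n^2$), and your closing remark that the $(1-\eps^{1/2})$-goodness of $x_{i^*}$ is irrelevant here is accurate.
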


\begin{proof}
Let ${\cal T}_1\subseteq {\cal T}$ be the set of constraints that are violated in $f'$ and in $f_\OPT$.
Let ${\cal T}_2\subseteq {\cal T}$ be the set of constraints that are violated in $f'$ but not in $f_\OPT$.
We have $|{\cal T}_1| \leq \eps n \binom{n-1}{2}$.
Since $f_\OPT$ and $f'$ differ only on $x_1$ and $x_{i^*}$, it follows that
every constraint $(i,j,k)\in {\cal T}_2$ must contain at least one of $1$ and $i^*$.
There are at most $6 n^2$ such constraints.
Thus $|{\cal T}_2|\leq 6 n^2$.
We conclude that the total number of constraints violated in $f'$ is at most $|{\cal T}_1|+|{\cal T}_2| \leq (\eps+O(1/n)) n \binom{n-1}{2}$, which concludes the proof.
\end{proof}

The next Lemma shows that $x_{i^*}$ remains $(1-O(\eps^{1/2}))$-good in $f'$.

\begin{lemma}\label{lem:pretty_good}
We have that $x_{i^*}$ is $(1-O(\eps^{1/2}))$-good in $f'$.
\end{lemma}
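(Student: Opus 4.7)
The plan is to compare, constraint by constraint, the satisfaction status of the triples $(i^*, j, k) \in \mathcal{T}$ under $f_\OPT$ versus under $f'$, exploiting the fact that $f'$ differs from $f_\OPT$ only on the two points $x_1$ and $x_{i^*}$, together with the fact that $i^* \leq 2\eps^{1/2} n$ is close to the leftmost index in the $f_\OPT$-ordering.

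I would split the constraints $(i^*, j, k) \in \mathcal{T}$ into two groups based on where $j$ and $k$ sit in the ordering $x_1, x_2, \ldots, x_n$. Call a constraint \emph{easy} if both $j > i^*$ and $k > i^*$, and \emph{hard} otherwise. For each easy constraint, since $j, k > i^* \geq 1$, we have $f'(x_j) = f_\OPT(x_j)$ and $f'(x_k) = f_\OPT(x_k)$; moreover, both $f_\OPT(x_{i^*})$ and $f'(x_{i^*}) = f_\OPT(x_1)$ lie to the left of $f_\OPT(x_j)$ and $f_\OPT(x_k)$ on the real line. Consequently, under both $f_\OPT$ and $f'$ the constraint $(i^*, j, k)$ is equivalent to the single condition $f_\OPT(x_j) < f_\OPT(x_k)$, so its satisfaction status is unchanged by the swap. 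Hence the number of easy constraints violated in $f'$ equals the number violated in $f_\OPT$, which by the $(1-\eps^{1/2})$-goodness of $x_{i^*}$ in $f_\OPT$ (Lemma~\ref{lem:good_exists}) is at most $\eps^{1/2}\binom{n-1}{2}$.

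For the hard constraints, at least one of $j, k$ must lie in $\{1, \ldots, i^*-1\}$ (recall $j, k \neq i^*$). The number of unordered pairs $\{j, k\} \subseteq [n] \setminus \{i^*\}$ with this property is at most $(i^* - 1)(n-1) \leq 2\eps^{1/2} n(n-1) = O(\eps^{1/2})\binom{n-1}{2}$, where we use $i^* \leq 2\eps^{1/2}n$. Since each such pair contributes exactly one constraint in $\mathcal{T}$, even if we pessimistically treat every hard constraint as violated in $f'$, their contribution is $O(\eps^{1/2})\binom{n-1}{2}$. Adding the two bounds gives a total of $O(\eps^{1/2})\binom{n-1}{2}$ violated constraints involving $x_{i^*}$ in $f'$, which establishes $(1-O(\eps^{1/2}))$-goodness.

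The argument is essentially elementary case analysis, and there is no real technical obstacle. The substantive observation is that when $x_j$ and $x_k$ both lie strictly to the right of $x_{i^*}$, swapping $x_1$ and $x_{i^*}$ does not move $x_{i^*}$ to the right of either of them, so the relevant distance comparison is preserved. The quantitative bound $i^* \leq 2\eps^{1/2} n$ from Lemma~\ref{lem:good_exists} is used precisely to ensure that the hard group, over which our accounting is pessimistic, contains only $O(\eps^{1/2} n^2)$ constraints.
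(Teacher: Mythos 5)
Your proof is correct and follows essentially the same route as the paper's: both arguments observe that for a constraint $(x_{i^*}, j, k)$ with $j, k$ to the right of $x_{i^*}$ in $f_\OPT$, the swap moves $x_{i^*}$ only further left, so the comparison $|f(x_{i^*}) - f(j)| < |f(x_{i^*}) - f(k)|$ reduces to the unchanged condition $f(j) < f(k)$; both then bound the remaining constraints (those with $j$ or $k$ among the first $i^*-1$ points) by $O(\eps^{1/2})\binom{n-1}{2}$ using $i^* \leq 2\eps^{1/2}n$. Your accounting is marginally more pessimistic (you charge all hard constraints rather than only those satisfied in $f_\OPT$ but broken in $f'$), but this does not affect the $(1 - O(\eps^{1/2}))$ bound.
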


\begin{proof}
Let $\gamma=(x_{i^*},j,k)\in {\cal T}$, and suppose that $\gamma$ is satisfied in $f_\OPT$.
If 
\[
f_\OPT(x_{i^*}) \leq f_\OPT(j) \leq f_\OPT(k),
\]
then, since $f'(j)=f_\OPT(j)$, and $f'(k)=f_\OPT(k)$,
it follows that
\[
f'(x_{i^*}) \leq f'(j) \leq f'(k),
\]
and thus $\gamma$ is also satisfied in $f'$.

Thus, the only possible constraints of the form $(x_{i^*},j,k)\in {\cal T}$, that are not violated in $f_\OPT$, but are violated in $f'$, must satisfy either
$f_\OPT(j)<f_\OPT(x_{i^*})$, 
or 
$f_\OPT(k)<f_\OPT(x_{i^*})$.
In other words, we must have $\{j,k\}\cap \{x_1,\ldots,x_{i^*-1}\} \neq \emptyset$.
Therefore, there are at most $(2\eps^{1/2} n)^2$ such constraints.
Since $x_{i^*}$ is $(1-\eps^{1/2})$-good in $f_\OPT$, it follows that $x_{i^*}$ is $(1-O(\eps^{1/2}))$-good in $f'$, which concludes the proof.
\end{proof}

Let
\[
F' = \{(j,k)\in A^{(i^*)} : (x_{i^*},j,k) \in {\cal T} \text{ and } f' \text{ violates } (x_{i^*},j,k)\}.
\]
The next Lemma shows $F'$ is a valid feedback arc set for $G^{(i^*)}$.

\begin{lemma}\label{lem:F_prime_small}
$F'$ is a feedback arc set for  $G^{(i^*)}$, with $|F'|\leq (O(\eps^{1/2})) \binom{n-1}{2}$.
\end{lemma}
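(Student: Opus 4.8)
The plan is to prove the two assertions separately: the cardinality bound on $F'$, and the fact that $G^{(i^*)}\setminus F'$ is acyclic. For the cardinality bound I would note that $(j,k)\mapsto (x_{i^*},j,k)$ is a bijection between $F'$ and the set of constraints of the form $(x_{i^*},j,k)\in{\cal T}$ that are violated by $f'$. By Lemma \ref{lem:pretty_good}, $x_{i^*}$ is $(1-O(\eps^{1/2}))$-good in $f'$, so at least $(1-O(\eps^{1/2}))\binom{n-1}{2}$ of the $\binom{n-1}{2}$ constraints with $x_{i^*}$ in the first coordinate are satisfied by $f'$; hence at most $O(\eps^{1/2})\binom{n-1}{2}$ of them are violated, and therefore $|F'|\le O(\eps^{1/2})\binom{n-1}{2}$.

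For the feedback-arc-set property, the key observation is that $x_{i^*}$ is a left-most point under $f'$: by construction $f'(x_{i^*})=f_\OPT(x_1)=\min_{i\in[b]}f_\OPT(x_i)$, while $f'(x_1)=f_\OPT(x_{i^*})\ge f_\OPT(x_1)$ and $f'(x_i)=f_\OPT(x_i)\ge f_\OPT(x_1)$ for all other $i$. Now consider any arc $(j,k)\in A^{(i^*)}\setminus F'$; by definition of $F'$ this corresponds to a constraint $(x_{i^*},j,k)\in{\cal T}$ that is satisfied by $f'$. Since $f'(x_{i^*})\le f'(j)$ and $f'(x_{i^*})\le f'(k)$, the satisfied inequality $|f'(x_{i^*})-f'(j)|<|f'(x_{i^*})-f'(k)|$ reduces to $f'(j)<f'(k)$. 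Thus every arc of $G^{(i^*)}\setminus F'$ points from a vertex of smaller $f'$-value to a vertex of strictly larger $f'$-value. A directed cycle in $G^{(i^*)}\setminus F'$ would then produce a strictly increasing cyclic sequence of real numbers, which is impossible; hence $G^{(i^*)}\setminus F'$ is acyclic, i.e., $F'$ is a feedback arc set for $G^{(i^*)}$.

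I expect this argument to be essentially routine, and I would not anticipate a real obstacle. The one point that genuinely matters is the reduction of the absolute-value inequality to $f'(j)<f'(k)$: it uses precisely that $x_{i^*}$ is left-most under $f'$, which is the whole reason the swap defining $f'$ is performed rather than working with $f_\OPT$ directly (whose left-most point $x_1$ need not be good). I would also take a little care to confirm that the correspondence between violated constraints and arcs of $F'$ is one-to-one, so that the good-point bound from Lemma \ref{lem:pretty_good} transfers without loss to $|F'|$.
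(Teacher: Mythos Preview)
Your proposal is correct and follows essentially the same approach as the paper: bound $|F'|$ via Lemma~\ref{lem:pretty_good}, then use that $x_{i^*}$ is left-most under $f'$ to conclude that every surviving arc $(j,k)$ satisfies $f'(j)<f'(k)$, giving acyclicity. The only cosmetic difference is that the paper exhibits the explicit topological ordering $x_{i^*},x_2,\ldots,x_{i^*-1},x_1,x_{i^*+1},\ldots,x_n$ rather than arguing by contradiction that a cycle would yield a strictly increasing cyclic sequence.
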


\begin{proof}
By Lemma \ref{lem:pretty_good}, $x_{i^*}$ is $(1-O(\eps^{1/2}))$-good, and thus $|F'| \leq (O(\eps^{1/2})) \binom{n-1}{2}$.
Thus, it suffices to show that $F'$ is a feedback vertex set.
For any $(j,k)\in A^{(i^*)}\setminus F'$, we have that $(x_{i^*}, j, k)$ is satisfied in $f'$.
Since $x_{i^*}$ is mapped to the left-most point in $f'$, it follows that 
$f'(j)<f'(k)$.
It follows that
\[
x_{i^*}, x_2,x_3,\ldots,x_{i^*-1},x_1,x_{i^*+1}, x_{i^*+2},\ldots,x_n
\]
is a topological ordering of $G^{(i^*)}\setminus F'$, and thus $F'$ is a feedback arc set, which concludes the proof.
\end{proof}

If the instance admits a solution with no violations, then it can be shown that the bucketing ${\cal B}^{(i^*)}$ computed by the algorithm agrees with a partition of the optimal solution to contiguous disjoint intervals.
In the following, we show that, in the general case, the bucketing is ``close'' to such a partition.
First, we introduce a notion of ``stability'' which formalizes what it means for a bucket to be close to an optimal interval.

\begin{definition}[Stability]\label{def:stability}
Let $i\in [b]$.
We say that $i$ is \emph{stable} if there exists some interval $I\subset \mathbb{R}$, such that
\[
\left|I \cap f'\left(B_i^{(i^*)}\right)\right| \geq (1-\eps^{1/8}) \cdot n/b,
\]
and
\[
\left|I \cap f'\left([b]\setminus B_i^{(i^*)}\right)\right| \leq \eps^{1/8} \cdot n/b,
\]
We also say that $i$ is $I$-stable.
We say that $i$ is \emph{unstable} ($I$-unstable) if it is not stable ($I$-stable).
\end{definition}

The following Lemma gives a characterization of unstable buckets.

\begin{lemma}\label{lem:unstable_intervals}
Suppose that $i\in [b]$ is unstable.
Then there exist pairwise disjoint intervals $I_1, I_2, I_3 \subset \mathbb{R}$, that appear in this order from left to right in the line, 
such that
\[
\left|I_1 \cap f'\left(B_i^{(i^*)}\right)\right| \geq n \eps^{1/8}/(2b),
\]
\[
\left|I_3 \cap f'\left(B_i^{(i^*)}\right)\right| \geq n \eps^{1/8}/(2b),
\]
and
\[
\left|I_2 \cap f'\left([b]\setminus B_i^{(i^*)}\right)\right| > \eps^{1/8} \cdot n/b,
\]
\end{lemma}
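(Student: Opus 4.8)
The plan is to prove the contrapositive-style structural claim by carefully unpacking Definition~\ref{def:stability}. Fix an unstable bucket $i$, and consider the multiset $S = f'(B_i^{(i^*)})$ of $n/b$ real values (the images of the points in the bucket). The idea is to slide a window of the appropriate ``width'' along the line and look for a placement that captures most of $S$ while excluding most of the complement; unstability says no such placement exists, and I want to convert that failure into the three-interval witness.

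First I would make precise a notion of where $S$ is concentrated. Let $I$ range over all intervals of $\mathbb{R}$. Define $g(I) = |I \cap S|$, the number of bucket points landing in $I$. Since $i$ is $I$-unstable for \emph{every} interval $I$, for each $I$ one of the two defining inequalities fails: either $g(I) < (1-\eps^{1/8})\, n/b$, or $|I \cap f'([b]\setminus B_i^{(i^*)})| > \eps^{1/8}\, n/b$. I would pick a specific ``good'' interval to work with: let $a = \min S$ and $c = \max S$ and consider the closed interval $[a,c]$, which contains \emph{all} of $S$, so $g([a,c]) = n/b \geq (1-\eps^{1/8})\,n/b$. By unstability applied to $I = [a,c]$, the first inequality cannot be the one that fails, so we must have $|[a,c] \cap f'([b]\setminus B_i^{(i^*)})| > \eps^{1/8}\,n/b$; that is, more than $\eps^{1/8}\, n/b$ complement points lie strictly between the leftmost and rightmost bucket points. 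This already gives the ``middle interfered'' phenomenon, but I still need to split the bucket mass so that at least $n\eps^{1/8}/(2b)$ of it sits strictly to the left of some gap and at least that much strictly to the right.

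The key step is a pigeonhole/threshold argument on the ordering of the complement points inside $[a,c]$. Enumerate the complement points in $[a,c] \cap f'([b]\setminus B_i^{(i^*)})$ in increasing order as $y_1 \le y_2 \le \dots \le y_m$ with $m > \eps^{1/8}\, n/b$. For each $t$, let $L_t$ be the number of bucket points that are $< y_t$ and $R_t$ the number that are $> y_t$ (ties can be handled by perturbing, or by choosing which side to count them on consistently). As $t$ increases, $L_t$ is nondecreasing and $R_t$ is nonincreasing; at $t$ near $1$ we have $L_t$ small (but positive, since $y_1 > a = \min S$ forces at least one bucket point below it), and at $t$ near $m$ we have $R_t$ small (but positive, since $y_m < c = \max S$). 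I would then argue that there is \emph{some} index $t$ with $L_t \ge n\eps^{1/8}/(2b)$ and simultaneously $R_{t'} \ge n\eps^{1/8}/(2b)$ for some $t' > t$ with enough complement points strictly between $y_t$ and $y_{t'}$ — the cleanest route is: if $L_t < n\eps^{1/8}/(2b)$ for all $t$ then fewer than $n\eps^{1/8}/(2b)$ bucket points lie below $y_m$, hence more than $n/b - n\eps^{1/8}/(2b)$ bucket points lie in $[y_m, c]$, and one checks the interval $I = [y_m, c]$ is then $I$-stable (it captures $\ge (1-\eps^{1/8})n/b$ bucket points once we also control complement points above $y_m$, which are few since $y_m$ is the last complement point in $[a,c]$) — contradiction. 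Symmetrically $R_1 \ge n\eps^{1/8}/(2b)$ must fail to be forced... actually the cleaner symmetric statement is that there exist complement points $y_s < y_{s'}$ in $[a,c]$ with at least $n\eps^{1/8}/(2b)$ bucket points $< y_s$, at least $n\eps^{1/8}/(2b)$ bucket points $> y_{s'}$, and at least one complement point in $(y_s, y_{s'})$ — then set $I_1 = (-\infty, y_s]$ intersected down to the bucket points it contains, $I_3 = [y_{s'}, \infty)$ likewise, and $I_2 = (y_s, y_{s'})$, shrinking $I_2$ if necessary so it still contains $> \eps^{1/8}\,n/b$ complement points. To get the $I_2$ count right I would choose $s$ and $s'$ so that at most $n\eps^{1/8}/(2b)$ bucket points lie below $y_s$ and at most $n\eps^{1/8}/(2b)$ lie above $y_{s'}$, forcing $\ge n/b - n\eps^{1/8}/b$ bucket points and hence $> \eps^{1/8}\,n/b$ complement points (by the $[a,c]$ count minus the at-most-$n\eps^{1/8}/(2b)$-twice that could escape) to be squeezed into $(y_s, y_{s'})$.

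The main obstacle I anticipate is the bookkeeping to make all three inequalities hold \emph{simultaneously} with the stated constants: the ``$(1-\eps^{1/8})n/b$ vs.\ $\eps^{1/8}n/b$'' dichotomy of stability has to be invoked at several candidate intervals, and I need to choose the split points $y_s, y_{s'}$ so that the leftover bucket mass bounds ($\le n\eps^{1/8}/(2b)$ on each end) and the complement mass bound in the middle ($> \eps^{1/8}n/b$) are mutually consistent — which works precisely because $n\eps^{1/8}/(2b) + n\eps^{1/8}/(2b) = n\eps^{1/8}/b < $ (mass unaccounted for), i.e.\ the factor-of-two slack in $I_1, I_3$ is exactly what leaves room for the $I_2$ bound. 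A secondary nuisance is handling ties in $f'$ (several points mapped to the same real); I would dispose of this by an arbitrarily small symbolic perturbation that breaks ties without changing which intervals are stable, or by being careful to use half-open intervals consistently. Everything else is a direct translation of ``no single good window exists'' into ``the bucket mass is split by an obstructing chunk of foreign points,'' which is the content of the lemma.
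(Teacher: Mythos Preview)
Your overall instinct—that instability forces the bucket mass to straddle a block of foreign points—is right, but the execution via complement-point thresholds $y_s,y_{s'}$ does not close. In your final paragraph you ask that \emph{at least} $n\eps^{1/8}/(2b)$ bucket points lie below $y_s$ (so that $I_1$ meets its bound) and simultaneously that \emph{at most} $n\eps^{1/8}/(2b)$ lie below $y_s$ (so that enough bucket mass is forced into the middle). Nothing in your threshold-crossing argument guarantees a complement point sits at the right bucket-quantile: the sequence $L_t$ can jump past $n\eps^{1/8}/(2b)$ by an arbitrary amount between consecutive $y_t$'s. Your last count of what is ``squeezed into $(y_s,y_{s'})$'' also conflates bucket points with complement points: knowing that few \emph{bucket} points escape $(y_s,y_{s'})$ says nothing directly about how many of the \emph{complement} points $y_1,\dots,y_m$ lie inside it.

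The paper's proof avoids all of this by defining the split with \emph{bucket} points rather than complement points. Take $I_1$ to be the minimal interval containing the $n\eps^{1/8}/(2b)$ leftmost points of $f'(B_i^{(i^*)})$, and $I_3$ the minimal interval containing the $n\eps^{1/8}/(2b)$ rightmost; let $I_2$ be the gap between them. Since $\eps^{1/8}<1$ the two tails are disjoint, and $I_2$ then contains exactly $(1-\eps^{1/8})\,n/b$ bucket points by construction. A single application of instability with $I=I_2$ immediately yields $|I_2\cap f'([b]\setminus B_i^{(i^*)})|>\eps^{1/8}\,n/b$. There is no threshold-crossing, no enumeration of complement points, and no tie-breaking needed. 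The factor-of-two slack you correctly identified is used exactly here: trimming $\tfrac{1}{2}\eps^{1/8}\,n/b$ bucket points from each end leaves the middle still meeting the $(1-\eps^{1/8})\,n/b$ threshold in Definition~\ref{def:stability}.
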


\begin{proof}
Let $I_1\subset \mathbb{R}$ be the minimal interval that contains the $n \eps^{1/8}/(2b)$ left-most points in $f'(B_i^{(i^*)})$,
and let 
$I_3\subset \mathbb{R}$ be the minimal interval that contains the $n \eps^{1/8}/(2b)$ right-most points in $f'(B_i^{(i^*)})$.
Let $I_2\subset \mathbb{R}$ be the maximal interval that is contained between $I_1$ and $I_3$.
Since $\eps^{1/8}< 1$, we have that $I_1\cap I_3=\emptyset$, and therefore, all intervals $I_1$, $I_2$, $I_3$ are well-defined and pairwise disjoint.
By construction, $I_1$ and $I_3$ each contains exactly $n \eps^{1/8} / (2N)$ points in $f'(B_i^{(i^*)})$.
Therefore, it remains to show that $I_2$ contains more than $\eps^{1/8} n/b$ points in $f'([b]\setminus B_i^{(i^*)})$.
Suppose, for the sake of contradiction, that $I_2$ contains at most $\eps^{1/8} n/b$ in $f'([b]\setminus B_i^{(i^*)})$.
Then, $I_2$ contains exactly $(1-\eps^{1/8})n/b$ points in in $f'(B_i^{(i^*)})$, 
and at most $\eps^{1/8}n/b$ points in $f'([b]\setminus B_i^{(i^*)})$,
implying that $B_i$ is stable, which is a contradiction.
This concludes the proof.
\end{proof}

We next show that for each unstable bucket, the feedback arc set must contain many edges incident to vertices in the bucket.

\begin{lemma}\label{lem:max_cut_edges}
Let $i\in [b]$ be unstable.
Then, $F^{(i^*)}\cup F'$ contains at least $\eps^{1/4} n^2/(2b^2)$ arcs having exactly one endpoint in $B_i^{(i^*)}$.
\end{lemma}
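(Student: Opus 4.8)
The plan is to exploit the structure given by Lemma~\ref{lem:unstable_intervals}: for an unstable bucket $i$, we have three disjoint intervals $I_1, I_2, I_3$ (left to right) where $I_1$ and $I_3$ each contain at least $n\eps^{1/8}/(2b)$ points of $f'(B_i^{(i^*)})$, while $I_2$ contains more than $\eps^{1/8} n/b$ points of $f'([b]\setminus B_i^{(i^*)})$. The key idea is that for $x_{i^*}$ mapped to the leftmost point in $f'$, the constraint $(x_{i^*}, j, k)$ being \emph{satisfied} in $f'$ is equivalent to $f'(j) < f'(k)$; and since the algorithm's topological ordering on $G^{(i^*)}\setminus F^{(i^*)}$ places points of $B_i^{(i^*)}$ into a contiguous block (by Step~4 of the algorithm), we can compare that block's position against the $I_2$-points that must, in $f'$, be interleaved with the $I_1$- and $I_3$-points of $B_i^{(i^*)}$.

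First I would set up the relevant vertex sets: let $L \subseteq B_i^{(i^*)}$ be the (at least $n\eps^{1/8}/(2b)$ many) points landing in $I_1$, let $R \subseteq B_i^{(i^*)}$ be the (at least $n\eps^{1/8}/(2b)$ many) points landing in $I_3$, and let $M \subseteq [b]\setminus B_i^{(i^*)}$ be the (more than $\eps^{1/8} n/b$ many) points landing in $I_2$. Since $I_1, I_2, I_3$ appear in this order, for every $u\in L$, $m\in M$, $v\in R$ we have $f'(u) < f'(m) < f'(v)$, so the constraints $(x_{i^*}, u, m)$ and $(x_{i^*}, m, v)$ are satisfied in $f'$ (equivalently, the arcs $(u,m)$ and $(m,v)$ lie in $A^{(i^*)}\setminus F'$, using the definition of $F'$ and Lemma~\ref{lem:F_prime_small}). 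Now consider the topological ordering $z_1^{(i^*)},\ldots,z_n^{(i^*)}$ of $G^{(i^*)}\setminus F^{(i^*)}$ used by the algorithm. Each $m\in M$ sits at some position in this ordering; since $B_i^{(i^*)}$ is a contiguous block of $n/b$ consecutive positions, $m$ is either before the block, after the block, or — impossible since $m\notin B_i^{(i^*)}$ — inside it. If $m$ is ordered before the block, then in particular $m$ is ordered before every $v\in R$, so the arc $(m,v)$ either agrees with the ordering or is in $F^{(i^*)}$; but $(m,v)\notin F'$, so the only way this is consistent is... here is the crux: I would argue that if $m$ precedes the whole block in the topological order, then for each $v\in R$ the arc must be oriented $(m,v)$ in $G^{(i^*)}\setminus F^{(i^*)}$ or be cut, whereas $(m,v)\in A^{(i^*)}\setminus F'$ forces $(m,v)$ to be the true orientation; symmetrically if $m$ follows the block it conflicts with the $L$-arcs. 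In either case, $m$ contributes at least $\min(|L|,|R|) \geq n\eps^{1/8}/(2b)$ arcs of $F^{(i^*)}\cup F'$ that cross between $B_i^{(i^*)}$ and $\{m\}$ — wait, more carefully, the arcs it forces into the cut are the ones incident to the $L$-side or the $R$-side that disagree with $m$'s placement.

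To make the counting clean, I would instead argue as follows: fix the block's position and say $m\in M$ is placed before the block in the topological order. For every $v\in R$, the arc $(m,v)$ is present in $A^{(i^*)}$ (since $(x_{i^*},m,v)\in{\cal T}$), it is oriented $m\to v$ and consistent with the order, so that gives no cut edge. But also for every $u\in L$ with $(x_{i^*},u,m)\in{\cal T}$ we have the arc $u\to m$; since $m$ is before the block and $u$ is in the block, $u$ comes \emph{after} $m$ in the order, so the arc $u\to m$ is a \emph{backward} arc, hence must be in $F^{(i^*)}$. This yields $|L|\geq n\eps^{1/8}/(2b)$ cut arcs incident to $B_i^{(i^*)}$ with other endpoint $m$. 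Symmetrically, if $m$ is placed after the block, the $|R|\geq n\eps^{1/8}/(2b)$ arcs $m\to v$ (with $v$ in the block, hence before $m$) are backward, so they lie in $F^{(i^*)}$. Either way each $m\in M$ is charged with at least $n\eps^{1/8}/(2b)$ arcs of $F^{(i^*)}$ that have exactly one endpoint in $B_i^{(i^*)}$, and these charges are disjoint across different $m$ (the other endpoint $m$ itself distinguishes them, since the counted arcs all have $m$ as their non-$B_i^{(i^*)}$ endpoint). Summing over all $|M| > \eps^{1/8}n/b$ choices of $m$ gives at least $(\eps^{1/8}n/b)(n\eps^{1/8}/(2b)) = \eps^{1/4}n^2/(2b^2)$ arcs of $F^{(i^*)}\cup F'$ with exactly one endpoint in $B_i^{(i^*)}$, as desired.

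The main obstacle I anticipate is the bookkeeping in the case analysis of where $m$ sits relative to the block and making sure I always find a clean family of backward (hence cut) arcs — in particular handling the subtlety that some pairs $(x_{i^*},u,m)$ might actually appear in ${\cal T}$ as $(x_{i^*},m,u)$ instead, which flips the arc direction; but since $f'(u)<f'(m)$ for $u\in L\subseteq I_1$ and $m\in M\subseteq I_2$, the constraint $(x_{i^*},u,m)$ is the one satisfied in $f'$, so if ${\cal T}$ instead contains $(x_{i^*},m,u)$ it is \emph{violated} in $f'$ and the arc $(m,u)$ lies in $F'$ — which is still in $F^{(i^*)}\cup F'$, so the count goes through either way. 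A secondary nuisance is confirming the disjointness of the charges carefully enough to avoid a factor-loss; identifying each charged arc by its $M$-endpoint handles this.
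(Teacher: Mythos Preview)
Your proposal is correct and follows essentially the same approach as the paper: both arguments fix an outside point $m\in I_2$, case on whether $m$'s position in the topological ordering is before or after the contiguous block $B_i^{(i^*)}$ (the paper phrases this as $j<i$ versus $j>i$ for $m\in B_j^{(i^*)}$), and then show that every pair $(u,m)$ with $u$ in $I_1$ (resp.\ $I_3$) yields an arc in $F^{(i^*)}\cup F'$ crossing the cut, summing to $(\eps^{1/8}n/b)\cdot(n\eps^{1/8}/(2b))$. Your handling of the orientation ambiguity (either the arc is backward hence in $F^{(i^*)}$, or the opposite constraint is violated by $f'$ hence the arc is in $F'$) and of disjointness via the $m$-endpoint is exactly what the paper does, and arguably stated more cleanly than the paper's own slightly garbled summation.
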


\begin{proof}
Let $I_1,I_2,I_3\subset \mathbb{R}$ be the intervals given by Lemma \ref{lem:unstable_intervals}.
Let $v\in [b] \setminus B_i^{(i^*)}$, such that $f'(v)\in I_2$.
Pick $j\in [b]$, such that $v\in B_j^{(i^*)}$.
We consider two cases:

\emph{Case 1: Suppose that $j<i$.}
Let $u\in B_i^{(i^*)}$, such that $f'(u)\in I_1$.
If $(v,u)\in A^{(i^*)}$, then it follows that $f'$ violates $(x_{i^*}, v, u)$, and thus $(v,u)\in F'$.
Otherwise, we have $(u,v)\in A^{(i^*)}$. Since $u$ appears after $v$ in the topological sort of $G^{(i^*)}\setminus F^{(i^*)}$, it follows that $(u,v)\in F^{(i^*)}$.
Thus, in either case, $F^{(i^*)}\cup F'$ contains either $(u,v)$ or $(v,u)$.
Therefore, $F^{(i^*)}\cup F'$ contains at least $n\eps^{1/8}/(2b)$ arcs having $u$ as an endpoint.

\emph{Case 2: Suppose that $j>i$.}
This case is similar to Case 1, and is included for completeness.
Let $u\in B_i^{(i^*)}$, such that $f'(u)\in I_3$.
If $(u,v)\in A^{(i^*)}$, then it follows that $f'$ violates $(x_{i^*}, u,v)$, and thus $(u,v)\in F'$.
Otherwise, we have $(v,u)\in A^{(i^*)}$. Since $u$ appears before $v$ in the topological sort of $G^{(i^*)}\setminus F^{(i^*)}$, it follows that $(v,u)\in F^{(i^*)}$.
Thus, in either case, $F^{(i^*)}\cup F'$ contains either $(u,v)$ or $(v,u)$.
Therefore, $F^{(i^*)}\cup F'$ contains at least $n\eps^{1/8}/(2b)$ arcs having $u$ as an endpoint.

We conclude that, in either case, for any $u\in B_i^{(i^*)}$, $F^{(i^*)}\cup F'$ contains at least $n\eps^{1/8}/(2b)$ arcs having $u$ as an endpoint.
Summing over all $u\in B_i^{(i^*)}$, we obtain that $F^{(i^*)}\cup F'$ contains at least $\eps^{1/4}n^2/(2b^2)$ arcs having an endpoint in $B_i^{(i^*)}$.
This concludes the proof.
\end{proof}

Next, we bound the number of unstable buckets.

\begin{lemma}\label{lem:unstable_N} Let $J = \{i\in [b]:i \text{ is unstable}\}$,
we have  $|J|\leq O(\eps^{1/4}) 2b^2$.
\end{lemma}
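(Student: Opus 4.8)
The plan is to bound $|J|$ by a counting argument on the arcs in the feedback set $F^{(i^*)}\cup F'$, using Lemma \ref{lem:max_cut_edges} as the workhorse. First I would recall that by Lemma \ref{lem:F_prime_small} we have $|F'| \leq O(\eps^{1/2})\binom{n-1}{2} = O(\eps^{1/2}) n^2$. For $F^{(i^*)}$, I would invoke the approximation guarantee of the algorithm in Theorem \ref{thm:FAS}: since $F'$ is itself a feedback arc set for $G^{(i^*)}$ (Lemma \ref{lem:F_prime_small}), the optimal feedback arc set has size at most $|F'| = O(\eps^{1/2}) n^2$, and hence the $O(1)$-approximate solution $F^{(i^*)}$ computed in Step 2 also has size $|F^{(i^*)}| = O(\eps^{1/2}) n^2$. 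Therefore $|F^{(i^*)} \cup F'| \leq O(\eps^{1/2}) n^2$.

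Next I would combine this with Lemma \ref{lem:max_cut_edges}. For each unstable bucket $i \in J$, that Lemma guarantees at least $\eps^{1/4} n^2/(2b^2)$ arcs of $F^{(i^*)}\cup F'$ that have exactly one endpoint in $B_i^{(i^*)}$. Crucially, an arc with exactly one endpoint in $B_i^{(i^*)}$ has its other endpoint outside $B_i^{(i^*)}$, so it can be ``charged'' to at most two distinct buckets (the one containing its head and the one containing its tail); in fact each such arc is counted for at most two values of $i \in J$. Summing the per-bucket lower bound over all $i \in J$ and dividing by $2$ to account for this double counting gives
\[
|F^{(i^*)}\cup F'| \;\geq\; \frac{1}{2}\,|J|\cdot \frac{\eps^{1/4} n^2}{2b^2} \;=\; \frac{|J|\,\eps^{1/4} n^2}{4 b^2}.
\]
Rearranging yields $|J| \leq \frac{4 b^2}{\eps^{1/4} n^2}\,|F^{(i^*)}\cup F'| \leq \frac{4 b^2}{\eps^{1/4} n^2}\cdot O(\eps^{1/2}) n^2 = O(\eps^{1/4}) b^2$, which (up to the constant factors absorbed in the $O(\cdot)$ and the stated $2b^2$) is exactly the claimed bound $|J| \leq O(\eps^{1/4})\, 2b^2$.

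The main obstacle I anticipate is not the arithmetic but being careful about two points. The first is justifying that $|F^{(i^*)}|$ is genuinely $O(\eps^{1/2}) n^2$: this relies on the fact that the feedback-arc-set algorithm is run on the tournament $G^{(i^*)}$ for the particular guessed value $p = i^*$ in Step 1, that $F'$ witnesses a small optimum for that tournament, and that the $(1+\epsilon)$-approximation (with $\epsilon$ a fixed constant, yielding $O(1)$-approximation) preserves the bound up to constants. The second subtlety is the charging/double-counting step: I must make sure that the arcs counted in Lemma \ref{lem:max_cut_edges} for bucket $i$ really do each get charged to a bounded number of buckets — an arc with exactly one endpoint in $B_i$ contributes to $i$'s count, and its unique other endpoint lies in exactly one bucket $B_{i'}$, so the arc is counted at most twice across all of $J$. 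Once these are pinned down, the inequality chain above closes out the proof.
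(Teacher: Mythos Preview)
Your proposal is correct and follows essentially the same route as the paper: bound $|F^{(i^*)}\cup F'|$ by $O(\eps^{1/2})n^2$ via Lemma~\ref{lem:F_prime_small} and the $O(1)$-approximation guarantee of Theorem~\ref{thm:FAS}, then divide by the per-bucket lower bound $\eps^{1/4}n^2/(2b^2)$ from Lemma~\ref{lem:max_cut_edges}. If anything you are slightly more careful than the paper, which simply writes $|J|\leq |F^{(i^*)}\cup F'|/(\eps^{1/4}n^2/(2b^2))$ and absorbs the double-counting factor into the $O(\cdot)$.
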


\begin{proof}
By Lemma \ref{lem:pretty_good}
we have that $x_{i^*}$ is $(1 - O(\eps^{1/2}))$-good in $f'$, and
by Lemma \ref{lem:F_prime_small} we have that $G^{(i^*)}$ admits a feedback arc set of size at most $(O(\eps^{1/2}))\binom{n-1}{2}$.
Thus, by Theorem \ref{thm:FAS}, the algorithm computes some feedback arc set $F^{(i^*)}\subset A^{(i^*)}$, with $|F^{(i^*)}| = O(\eps^{1/2} n^2)$.
We note that here we only use Theorem \ref{thm:FAS} to obtain a $O(1)$-approximation.
By Lemma \ref{lem:max_cut_edges}, 
\begin{align*}
|J| &\leq |F^{(i^*)}\cup F'| / (\eps^{1/4}n^2/(2b^2))\\
 &\leq O(\eps^{1/4}) 2b^2,
\end{align*}
which concludes the proof.
\end{proof}

For any stable $i\in [b]$, let ${\cal I}_i\subset \mathbb{R}$ be the interval that contains at least $(1-\eps^{1/8}) n/b$ points in $f'(B_i^{(i^*)})$, and at most $\eps^{1/8} n/b$ other points.
Let also ${\cal J}_i\subset {\cal I}_i$ be an open interval that contains all but the $\eps^{1/8}n/b$ leftmost points in $f'(B_i^{(i^*)})\cap {\cal I}_i$, and the $\eps^{1/8}n/b$ rightmost points in $f'(B_i^{(i^*)})\cap {\cal I}_i$.
Thus, $|{\cal J}_i \cap f'(B_i^{(i^*)})| \geq (1-3\eps^{1/8})n/b$.
It follows that for any $i\neq j\in [b]$, such that both $i$ and $j$ are stable, we have
${\cal J}_i\cap {\cal J}_j=\emptyset$.

Intuitively, we intend to find a solution that satisfies a nearly-optimal fraction of constraints, while ignoring all constraints that involve points that are mapped outside the intervals ${\cal J}_i$, where $i\in [b]$ is stable.
To that end, we define a small set of points that the analysis can safely ``ignore'':
\[
X_\noise = \bigcup_{i\in [b] : i \text{ stable}} \left\{v\in B_{i}^{(i^*)} : f'(v)\notin {\cal J}_i \right\}.
\]
Since $|{\cal J}_i \cap f'(B_i^{(i^*)})| \geq (1-3\eps^{1/8})n/b$, it follows that
\begin{align}
|X_\noise| &\leq 3\eps^{1/8} n  \label{eq:X_noise}
\end{align}
Let also, for any $i\in [b]$, 
\[
\bar{B}_i^{(i^*)} = B_i^{(i^*)} \setminus X_\noise.
\]

We identify a set of triples $(i,j,k)\in [b]^3$ for which, intuitively, it is difficult to satisfy at least some significant fraction of all constraints with one point from each of the clusters $B_i^{(i^*)}$, $B_j^{(i^*)}$, and $B_k^{(i^*)}$.
Formally, we say that some $(i,j,k)\in [b]^3$ is \emph{brittle} if there exist 
$u,u'\in {\cal J}_i$, $v,v'\in {\cal J}_j$, and $w,w'\in {\cal J}_k$, such that 
\[
|u-v|<|u-w|,
\]
and
\[
|u'-v'|>|u'-w'|.
\]
Intuitively, the above property implies that if for all $t\in [b]$, all points in $\bar{B}_t^{(i^*)}$ get mapped to the same point  $p_t\in {\cal J}_t$, then there exist choices for the points $\{p_t\}_t$, such that some constraint in $\bar{B}_i^{(i^*)} \times \bar{B}_j^{(i^*)} \times \bar{B}_k^{(i^*)}$ is violated; in other words, if a triple $(i,j,k)$ is not brittle, then the choice of the points $p_t$ does not affect the satisfiability of the constraints in $\bar{B}_i^{(i^*)} \times \bar{B}_j^{(i^*)} \times \bar{B}_k^{(i^*)}$.

We are now ready to show that the retraction computed by the algorithm admits a solution of low total cost.

\begin{lemma}\label{lem:retraction_cost}
The instance $\psi^{(i^*)}$ of $WLLOC$ constructed in Step 4 admits a solution that satisfies constraints of total weight at least $|{\cal T}| \cdot (1- O(\eps^{1/8}))$.
\end{lemma}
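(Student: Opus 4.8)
The plan is to exhibit an explicit feasible solution $g\colon[b]\to[0,1]$ for the retracted instance $\psi^{(i^*)}$ and bound the total weight of the constraints it violates. For each stable bucket $t$ fix a representative $p_t\in{\cal J}_t$, for each unstable bucket an arbitrary $p_t\in[0,1]$, and set $g(t)=p_t$ (the stable representatives will be pinned down generically below). By the Extension step, the map $f\colon[n]\to[0,1]$ with $f(x)=g(t)$ for $x\in B_t^{(i^*)}$ is exactly the pullback of $g$; and under the identification of each original constraint $(u,v,w)\in{\cal T}$ with the retracted constraint on the triple of buckets containing $u,v,w$ in this order, a retracted constraint $(i,j,k)$ carries weight equal to the number of original constraints it represents and is violated by $g$ iff all of them are violated by $f$. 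Hence the weight violated by $g$ equals the number of original constraints lying in three distinct buckets that $f$ violates, which is at most the number of constraints $f$ violates in total; so it suffices to show $f$ violates $O(\eps^{1/8})|{\cal T}|$ constraints.

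I would classify ${\cal T}$ by where its three points lie. By \eqref{eq:X_noise}, at most $3|X_\noise|\,n^2=O(\eps^{1/8})|{\cal T}|$ constraints have a point in $X_\noise$. Among the rest, at most $3|J|(n/b)\,n^2$ have a point in an unstable bucket; since Lemma \ref{lem:unstable_N} gives $|J|\le O(\eps^{1/4})\,2b^2$ and $b=\Theta(\eps^{-1/8})$, this too is $O(\eps^{1/8})|{\cal T}|$. The still-remaining constraints have all three points in sets $\bar{B}_t^{(i^*)}$ with $t$ stable; of these, at most $O(n^3/b)=O(\eps^{1/8})|{\cal T}|$ have two points in a common bucket. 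Write each of the rest as $u\in\bar{B}_i^{(i^*)}$, $v\in\bar{B}_j^{(i^*)}$, $w\in\bar{B}_k^{(i^*)}$ with $i,j,k$ distinct stable buckets, and call it \emph{brittle} when $(i,j,k)$ is a brittle triple.

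Choose the stable representatives $\{p_t\}$ sufficiently generically that $|p_i-p_j|\neq|p_i-p_k|$ for all distinct $i,j,k$; this is possible since there are finitely many triples and each ${\cal J}_t$ is a nonempty open interval. Let $(u,v,w)$ be a non-brittle constraint, with $u\in\bar{B}_i^{(i^*)}$, $v\in\bar{B}_j^{(i^*)}$, $w\in\bar{B}_k^{(i^*)}$. If $f$ violates it then $|p_i-p_j|>|p_i-p_k|$, so by the definition of brittleness $|x-y|\ge|x-z|$ for every $(x,y,z)\in{\cal J}_i\times{\cal J}_j\times{\cal J}_k$; since $f'(u)\in{\cal J}_i$, $f'(v)\in{\cal J}_j$, $f'(w)\in{\cal J}_k$, this yields $|f'(u)-f'(v)|\ge|f'(u)-f'(w)|$, i.e.\ $f'$ violates $(u,v,w)$ too. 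Thus the non-brittle constraints violated by $f$ form a subset of those violated by $f'$, and by Lemma \ref{lem:f_prime_violations} there are at most $(\eps+O(1/n))|{\cal T}|$ of the latter.

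It remains to bound the brittle constraints, whose number is at most $(\text{number of brittle triples})\cdot(n/b)^3$. Here I would invoke the technical lemma of Section \ref{sec:brittle}, which I expect to show that the number of brittle triples of stable buckets is $O(\eps^{1/8}b^3)$, so that the brittle contribution is $O(\eps^{1/8}b^3)\cdot(n/b)^3=O(\eps^{1/8})|{\cal T}|$. Summing the five contributions --- constraints touching $X_\noise$, constraints touching an unstable bucket, constraints with two points in a common bucket, brittle constraints, and non-brittle constraints violated by $f$ --- shows $f$ violates $O(\eps^{1/8})|{\cal T}|$ constraints, so $g$ satisfies retracted constraints of weight at least $|{\cal T}|-O(\eps^{1/8})|{\cal T}|=(1-O(\eps^{1/8}))|{\cal T}|$, proving the lemma. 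The main obstacle is the bound on the number of brittle triples: it has to use that the intervals $\{{\cal J}_t\}$ for stable $t$ are pairwise disjoint --- hence $\sum_t|{\cal J}_t|\le 1$ --- to argue that only few ordered triples of them sit in the ``reflection-overlap'' configuration in which the choice of representatives can flip whether the constraint is satisfied.
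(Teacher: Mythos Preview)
Your proposal is correct and follows essentially the same route as the paper: construct a candidate solution for the retracted instance by picking one representative per bucket, pull it back to $[n]$, and partition ${\cal T}$ into (i) constraints touching $X_\noise$, (ii) constraints touching an unstable bucket, (iii) constraints with two points in one bucket, (iv) brittle constraints, and (v) the rest, which are satisfied because non-brittleness forces the inequality to go the same way for the representatives as it does in $f'$. Your bounds on each class match the paper's ${\cal T}_0,\ldots,{\cal T}_5$ bounds, and your anticipated bound of $O(\eps^{1/8}b^3)$ brittle triples is exactly the paper's Lemma~\ref{lem:brittle_N} bound of $O(b^2)$, since $b=\Theta(\eps^{-1/8})$.

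One remark on your closing paragraph: your guess that the brittle-triple bound should come from $\sum_t|{\cal J}_t|\le 1$ is not how the paper proceeds. The lengths of the ${\cal J}_t$ play no role. Instead, the paper observes (Lemma~\ref{lem:brittle_equidistant}) that brittleness of $(i,j,k)$ forces the plane $\{x_I-x_J=x_K-x_I\}$ to meet the box ${\cal J}_i\times{\cal J}_j\times{\cal J}_k$; since a plane must miss the interior of at least one of the eight sub-boxes of any $2\times 2\times 2$ block (Lemma~\ref{lem:hyperrectangles_intersection}), every such block contains a non-brittle triple (Lemma~\ref{lem:brittle_cube}); and a ``convexity'' property of brittleness along axis directions (Lemma~\ref{lem:brittle_convexity}) then propagates each non-brittle triple into a ray of $\Theta(b)$ non-brittle triples. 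A packing argument converts this into $|B|=O(b^2)$. So the obstacle you flag is real, but its resolution is purely combinatorial in the index grid $[b]^3$ rather than metric in $[0,1]$.
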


\begin{proof}
We define a mappings $g:[b]\to [0,1]$, and $g':[b]\to [0,1]$, as follows.
For each $i\in [b]$, pick $v_i\in {\cal B}_i^{(i^*)}$, arbitrarily, and set 
\[
g'(i)=f'(v_i).
\]
For any $j\in [b]$, we set
\[
g(j) = g'(i),
\]
where $i\in [b]$ is the unique integer such that $i\in B_i^{(i^*)}$.
By the definition of the WLLOC instance $\psi^{(i^*)}$, the total weight of the constraints violated by $g'$ equals the total number of constraints violated by $g$.
It therefore suffices to upper bound the number of constraints in ${\cal T}$ that are violated by $g$.

We define a partition ${\cal T}={\cal T}_0 \cup {\cal T}_1 \cup {\cal T}_2 \cup {\cal T}_3 \cup {\cal T}_4$, where
\begin{align*}
{\cal T}_0 &= \{(u,v,w)\in {\cal T} : f' \text{ violates } (u,v,w)\}, \\
{\cal T}_1 &= \{(u,v,w)\in {\cal T} : \text{ at least two of } u,v,w \text{ are in the same cluster in } {\cal B}^{(i^*)}\}, \\
{\cal T}_2 &= \{(u,v,w)\in {\cal T} : u\in B_i^{(i^*)}, v\in B_j^{(i^*)}, w\in B_k^{(i^*)}, \text{ and at least one of } i,j,k \text{ is unstable }\}, \\
{\cal T}_3 &= \{(u,v,w)\in {\cal T} : u\in B_i^{(i^*)}, v\in B_j^{(i^*)}, w\in B_k^{(i^*)}, \text{ and } (i,j,k)  \text{ is brittle}\}, \\
{\cal T}_4 &= \{(u,v,w)\in {\cal T} : \{u,v,w\}\cap X_{\noise} \neq \emptyset\}, \\ 
{\cal T}_5 &= {\cal T} \setminus ({\cal T}_0 \cup {\cal T}_1 \cup {\cal T}_2 \cup {\cal T}_3 \cup {\cal T}_4). 
\end{align*}
By Lemma \ref{lem:f_prime_violations} we have
\begin{align*}
|{\cal T}_1| &\leq (\eps + O(1/n)) n \binom{n-1}{2}.
\end{align*}
Since every cluster in ${\cal B}^{(i^*)}$ has $n/b$ points, we have
\begin{align}
|{\cal T}_1| &\leq 3 n^3/b^2. \label{eq:T1}
\end{align}
In order to bound $|{\cal T}_3|$ we need a bound on the number of brittle triples.
This is done in Lemma \ref{lem:unstable_N}, which appears in Section \ref{sec:brittle}.
We thus have
\begin{align}
|{\cal T}_2| &\leq O(\eps^{1/4}) 2 n^3 b. \label{eq:T2}
\end{align}
By Lemma \ref{lem:brittle_N} we have 
\begin{align}
|{\cal T}_3| &\leq n^3/b. \label{eq:T3}
\end{align}
By \eqref{eq:X_noise} we have 
\begin{align}
|{\cal T}_4| &\leq O(\eps^{1/8}) n^3 \label{eq:T4}
\end{align}

Let $(u,v,w)\in {\cal T}_5$.
By the definition of ${\cal T}_5$, we have that 
$u\in \bar{B}_i^{(i^*)}$,
$v\in \bar{B}_j^{(i^*)}$,
and
$w\in \bar{B}_k^{(i^*)}$,
for some distinct $i,j,k\in [b]$,
such that $(i,j,k)$ is not brittle,
and $f'$ satisfies $(u,v,w)$, that is
\[
|f'(u)-f'(v)| < |f'(u)-f'(w)|.
\]
By the definition of a brittle tripple we get
\[
|g(u)-g(v)| < |g(u)-g(w)|,
\]
and thus $g$ satisfies $(u,v,w)$.
We obtain that $g$ satisfies all constraints in ${\cal T}_5$.
Thus, by \eqref{eq:T1}--\eqref{eq:T4}, the number of constraints violated by $g$ is at most
$|{\cal T}_0|+\ldots+|{\cal T}_4| \leq n^3 O(\eps^{1/8})$, which concludes the proof.
\end{proof}


We are now ready to prove our main result.

\begin{proof}[Proof of Theorem \ref{thm:main}]
By Lemma \ref{lem:retraction_cost} we have that WLLOC instance $\psi^{(i^*)}=([b],{\cal T}',w)$ constructed at Step 4 of the algorithm, admits a mapping $g':[b]\to \mathbb{R}$, such that the total weight of the constraints in ${\cal T}'$ violated by $g'$ is at most $O(\eps^{1/8} n^3)$.
Therefore, in Step 5, using the exact algorithm from Theorem \ref{thm:weighted_exact}, we compute a mapping $g:[b]\to \mathbb{R}$, violating the same total weight as $g'$.
By the definition of retraction, it follows that the mapping $f^{(i^*)}$ computed in Step 5 violates at most $O(\eps^{1/8} n^3)$ constraints in ${\cal T}$, as required.

It remains to bound the running time.
Step 2 uses the algorithm from Theorem \ref{thm:FAS} to obtain a $O(1)$-approximate minimum feedback arc set, and thus takes time $O(n^6)$.
Step 3 takes time $O(n)$ and Step 4 takes time $O(n^2)$.
Step 5 runs the algorithm from Theorem \ref{thm:weighted_exact} on an input of size $N$, and thus takes time $O(N^{3N}) + O(n)$.
Step 6 requires computing the number of violated constraints in each of the $n$ solutions, and thus takes total time $O(n^4)$.
Due to Step 1, the Steps 2--5 are repeated $n$ times, and thus the total running time is at most $O(n^7 + b^{3b} n) = O(n^7) + (1/\eps)^{O(1/\eps^{1/8})} n$, which concludes the proof.
\end{proof}

\section{Bounding the number of brittle triples}
\label{sec:brittle}

This Section is devoted to proving an upper bound on the number of brittle triples.
We begin by deriving a simple condition that is a consequence of brittleness.


\begin{lemma}\label{lem:brittle_equidistant}
Let $j<i<k\in [b]$.
We have that if $(i,j,k)$ is brittle, then there exist $p_i\in {\cal J}_i$, $p_j\in {\cal J}_j$, $p_k\in {\cal J}_k$, such that \[
p_i-p_j = p_k-p_i.
\]
\end{lemma}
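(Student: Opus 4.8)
The plan is to unpack the definition of brittleness and use a continuity/intermediate-value argument on the relevant sign function. Recall that $(i,j,k)$ brittle means there are $u,u'\in\mathcal{J}_i$, $v,v'\in\mathcal{J}_j$, $w,w'\in\mathcal{J}_k$ with $|u-v|<|u-w|$ and $|u'-v'|>|u'-w'|$. Since $j<i<k$ and the intervals $\mathcal{J}_j,\mathcal{J}_i,\mathcal{J}_k$ are pairwise disjoint and appear in left-to-right order (by the construction preceding Lemma~\ref{lem:brittle_equidistant}, these are nested inside the disjoint $\mathcal{J}$'s of stable buckets), every point $p_j\in\mathcal{J}_j$ lies to the left of every $p_i\in\mathcal{J}_i$, which lies to the left of every $p_k\in\mathcal{J}_k$. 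Hence for such a triple of points the absolute values resolve: $|p_i-p_j| = p_i-p_j$ and $|p_i-p_k| = p_k-p_i$, so the constraint $|p_i-p_j|<|p_i-p_k|$ is equivalent to $p_i-p_j < p_k-p_i$, i.e.\ to $h(p_i,p_j,p_k):=(p_k-p_i)-(p_i-p_j)>0$.

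First I would record that $h$ is a continuous (indeed affine) function on the convex set $\mathcal{J}_i\times\mathcal{J}_j\times\mathcal{J}_k$. The brittleness hypothesis, rewritten via the sign resolution above, says $h(u,v,w)>0$ while $h(u',v',w')<0$. Since the domain $\mathcal{J}_i\times\mathcal{J}_j\times\mathcal{J}_k$ is connected (a product of intervals), the intermediate value theorem applied along the segment from $(u,v,w)$ to $(u',v',w')$ yields a point $(p_i,p_j,p_k)$ in that segment — hence still in $\mathcal{J}_i\times\mathcal{J}_j\times\mathcal{J}_k$ by convexity — with $h(p_i,p_j,p_k)=0$, i.e.\ $p_k-p_i = p_i-p_j$, which is exactly $p_i-p_j = p_k-p_i$ as claimed.

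The one point that needs care — and which I expect to be the only real obstacle — is justifying that the $\mathcal{J}$'s really are linearly ordered on the line when $j<i<k$, so that the absolute values collapse as stated; I would derive this from the fact that each $\mathcal{J}_t$ is contained in $\mathcal{I}_t$, which in turn concentrates $(1-\eps^{1/8})n/b$ of the points $f'(B_t^{(i^*)})$, together with the ordering of the buckets $B^{(i^*)}_1,\dots,B^{(i^*)}_b$ induced by the topological sort (Step 3) being consistent with the left-to-right order of $f'$ up to the stability slack — but for the Lemma's conclusion it suffices that the three intervals are pairwise disjoint and ordered $\mathcal{J}_j<\mathcal{J}_i<\mathcal{J}_k$, which the preceding paragraph's disjointness claim ($\mathcal{J}_i\cap\mathcal{J}_j=\emptyset$ for distinct stable $i,j$) plus the bucket ordering gives. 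Once that ordering is in hand, the rest is the routine affine intermediate-value argument sketched above, and I would close by noting that the conclusion only asserts existence of one equidistant triple, so no quantitative control is needed here.
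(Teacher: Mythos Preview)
Your core argument is the same as the paper's: resolve the absolute values using the relative position of the three intervals, then apply the intermediate value theorem to the affine function $h(p_i,p_j,p_k)=(p_k-p_i)-(p_i-p_j)$ on the convex product $\mathcal{J}_i\times\mathcal{J}_j\times\mathcal{J}_k$. Your version is in fact slightly cleaner than the paper's, since you interpolate all three coordinates along the segment from $(u,v,w)$ to $(u',v',w')$, whereas the paper fixes a single $p_i$ and interpolates only in the $j$- and $k$-coordinates.

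The one substantive difference is how you justify that $\mathcal{J}_i$ sits between $\mathcal{J}_j$ and $\mathcal{J}_k$. You try to extract this from the index ordering $j<i<k$ together with the bucket construction and stability; as you yourself flag, that route needs nontrivial work (the paper never proves that stable indices are monotone in the left-to-right position of the $\mathcal{J}$'s, and establishing it would require going back through the topological sort and the feedback-arc-set bound). The paper instead gets the in-betweenness directly from the brittleness hypothesis: if $\mathcal{J}_i$ were \emph{not} between the other two, then every point of $\mathcal{J}_i$ would be uniformly closer to all of one interval than to all of the other, so the sign of $|u-v|-|u-w|$ could not flip, contradicting brittleness. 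This is a one-line observation and avoids the detour through bucket ordering entirely; I would recommend swapping it in for your last paragraph.
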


\begin{proof}
If $(i,j,k)$ is brittle, it is easy to see that ${\cal J}_i$ must be located between ${\cal J}_j$ and ${\cal J}_k$; otherwise, any representative point chosen in ${\cal J}_i$ must be closer to all the points in ${\cal J}_j$ than those in ${\cal J}_k$, or vice versa. By definition, there exist $p_i\in {\cal J}_i$, $p_j'\in {\cal J}_j$, $p_k'\in {\cal J}_k$, such that \[
p_i-p_j' \ge p_k'-p_i,
\] and $p_j''\in {\cal J}_j$, $p_k''\in {\cal J}_k$, such that \[
p_i-p_j'' < p_k''-p_i.
\]
Without loss of generality, assume $p'_j<p''_j$ and $p'_k<p''_k$, and define $\delta\in[0,1]$. Comparing $d_{ij}(\delta)=p_i-(p'_j+\delta(p''_j-p'_j))$ and $d_{ik}(\delta)=(p'_k+\delta(p''_k-p'_k))-p_i$, we have $d_{ij}(0)-d_{ik}(0)\ge 0$ and $d_{ij}(1)-d_{ik}(1)<0$. There exist $\delta'\in[0,1]$, s.t. $d_{ij}(\delta')-d_{ik}(\delta')=0$. 

Define $p_j=(p'_j+\delta'(p''_j-p'_j))\in {\cal J}_i$ and $p_k=(p'_k+\delta'(p''_k-p'_k))\in {\cal J}_j$, we have \[
p_i-p_j = p_k-p_i,
\]
which concludes the proof.
\end{proof}

\begin{lemma}\label{lem:hyperrectangles_intersection}
Let $i_1,i_2,i_3,j_1,j_2,j_3,k_1,k_2,k_3\in \mathbb{R}$, with $i_1<i_2<i_3$, $j_1<j_2<j_3$, $k_1<k_2<k_3$.
For any $\alpha,\beta,\gamma\in \{1,2\}$, let $H_{\alpha,\beta,\gamma}$ be the axis-parallel parallelepiped defined by
\[
H_{\alpha,\beta,\gamma} := \convexhull(\{(i_{\alpha+\alpha'}, j_{\beta+\beta'}, k_{\gamma+\gamma'}) : \alpha',\beta',\gamma' \in \{0,1\}\}).
\]
Let $h$ be any plane in $\mathbb{R}^3$.
Then, there exist $\alpha^*,\beta^*,\gamma^* \in \{0,1\}$, such that
$h$ does not intersect the interior of $H_{\alpha^*,\beta^*,\gamma^*}$.
\end{lemma}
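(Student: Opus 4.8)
The plan is to exploit the special structure of the eight boxes that matter, namely the $H_{\alpha,\beta,\gamma}$ with $(\alpha,\beta,\gamma)\in\{1,2\}^3$: each of them has the point $c^*=(i_2,j_2,k_2)$ as a vertex, and together they are exactly the eight ``octants'' into which $c^*$ splits the large box $[i_1,i_3]\times[j_1,j_3]\times[k_1,k_3]$. I would write the plane as $h=\{x\in\mathbb{R}^3:\ell(x)=d\}$ for a nonzero linear functional $\ell(x)=a_1x_1+a_2x_2+a_3x_3$ and set $d_0=\ell(c^*)$. The goal is to produce one octant on which $\ell\ge d_0$ everywhere (and strictly so on its interior) and one octant on which $\ell\le d_0$ everywhere (strictly on its interior); then, according to whether $d\le d_0$ or $d\ge d_0$, the corresponding octant has interior disjoint from $h$, and one of the two cases always applies.

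The key step is an elementary per-coordinate observation. Writing the three grid values in coordinate $t$ as $p^t_1<p^t_2<p^t_3$ (so $c^*_t=p^t_2$, and $p^t_1,p^t_2,p^t_3$ are $i_1,i_2,i_3$, etc.), the crucial point is that $p^t_2$ is an endpoint of \emph{both} sub-intervals $[p^t_1,p^t_2]$ and $[p^t_2,p^t_3]$. Choosing, for each $t$, the sub-interval $[p^t_2,p^t_3]$ when $a_t\ge 0$ and $[p^t_1,p^t_2]$ when $a_t<0$ singles out one of the eight boxes, call it $H^{+}$, for which $a_t(x_t-p^t_2)\ge 0$ for every $x\in H^{+}$ and every $t$; summing over $t$ gives $\ell(x)-d_0=\sum_t a_t(x_t-c^*_t)\ge 0$ on $H^{+}$. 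Moreover any interior point $x$ of $H^{+}$ has $x_t\ne p^t_2$ for every $t$ (precisely because $p^t_2$ is an endpoint of the chosen sub-interval), so the coordinate $t$ with $a_t\ne 0$ — at least one exists since $\ell\ne 0$ — contributes a strictly positive term, whence $\ell(x)>d_0$. The mirror choice (take $[p^t_1,p^t_2]$ when $a_t>0$ and $[p^t_2,p^t_3]$ otherwise) yields $H^{-}$ with $\ell\le d_0$ on $H^{-}$ and $\ell<d_0$ on its interior.

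To finish, I would split on the sign of $d-d_0$: if $d\le d_0$, every interior point of $H^{+}$ has $\ell>d_0\ge d$, so $h$ misses the interior of $H^{+}$, and if $d\ge d_0$, the symmetric statement for $H^{-}$ applies; taking $(\alpha^*,\beta^*,\gamma^*)$ to be the index of $H^{+}$ or $H^{-}$ accordingly proves the lemma. The step I expect to require the most care in the writeup is the degenerate case $d=d_0$, i.e.\ the plane passing through the center $c^*$: since $c^*$ is a common vertex of all eight boxes, the non-strict inequalities $\ell\ge d_0$ and $\ell\le d_0$ are useless there, and one genuinely needs strictness on the \emph{open} box — which is exactly what the ``$p^t_2$ is always an endpoint of the chosen sub-interval'' observation delivers, together with $\ell\not\equiv 0$. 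A pleasant byproduct of phrasing the monotonicity coordinate by coordinate is that no case analysis on the sign pattern of $(a_1,a_2,a_3)$ is needed and vanishing coefficients cause no trouble, so planes parallel to a coordinate axis need no separate treatment; getting this robustness cleanly is really the only thing to be careful about.
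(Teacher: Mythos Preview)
Your proof is correct and is essentially the same argument as the paper's, just written out in full: the paper's one-line proof translates $c^*=(i_2,j_2,k_2)$ to the origin and invokes the fact that any halfspace through the origin contains a closed orthant, which is exactly what your coordinate-wise sign choice establishes. Your explicit handling of the case $d=d_0$ (the plane through $c^*$) and of vanishing coefficients $a_t$ makes precise the details the paper leaves implicit, but the underlying idea is identical.
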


\begin{proof}
For any $d\geq 2$, any $d$-dimensional halfspace containing the origin must also contain at least one $d$-orthant.
The assertion follows immediately from the case $d=3$.
\end{proof}

\begin{lemma}\label{lem:brittle_cube}
Let $i,j,k\in [b]$, with $j+1<i$, and $i+1<k$.
Then, there exist $i',j',k'\in \{0,1\}$ such that $(i+i', j+j', k+k')$ is not brittle.
\end{lemma}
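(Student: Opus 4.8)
The plan is to combine Lemmas~\ref{lem:brittle_equidistant} and~\ref{lem:hyperrectangles_intersection} around the fixed plane $h=\{(x,y,z)\in\mathbb{R}^3 : 2x=y+z\}$. If a triple whose middle index is $m$ and whose outer indices are $\ell<m<r$ is brittle, then Lemma~\ref{lem:brittle_equidistant} supplies $p_m\in{\cal J}_m$, $p_\ell\in{\cal J}_\ell$, $p_r\in{\cal J}_r$ with $p_m-p_\ell=p_r-p_m$, i.e.\ a point $(p_m,p_\ell,p_r)$ of $h$ lying in the box ${\cal J}_m\times{\cal J}_\ell\times{\cal J}_r$. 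I will apply this to the eight triples $(i+i',\,j+j',\,k+k')$ with $(i',j',k')\in\{0,1\}^3$, whose middle index is always $i+i'$; if all eight were brittle, $h$ would meet the interiors of all eight boxes formed from the pairs $\{{\cal J}_i,{\cal J}_{i+1}\}$, $\{{\cal J}_j,{\cal J}_{j+1}\}$, $\{{\cal J}_k,{\cal J}_{k+1}\}$, contradicting Lemma~\ref{lem:hyperrectangles_intersection}.

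First I would dispose of the degenerate case. If one of the six buckets $i,i+1,j,j+1,k,k+1$ is unstable, the corresponding interval ${\cal J}$ is undefined (equivalently, may be taken empty), so no triple using that bucket is brittle and the claim holds by choosing the offsets to point at it. Hence assume all six buckets are stable; then ${\cal J}_i,{\cal J}_{i+1},{\cal J}_j,{\cal J}_{j+1},{\cal J}_k,{\cal J}_{k+1}$ are nonempty open intervals, bounded since $f'$ maps into $[0,1]$, and ${\cal J}_i\cap{\cal J}_{i+1}={\cal J}_j\cap{\cal J}_{j+1}={\cal J}_k\cap{\cal J}_{k+1}=\emptyset$ by the disjointness of the ${\cal J}$-intervals of distinct stable buckets noted above.

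Next I would build the grid needed by Lemma~\ref{lem:hyperrectangles_intersection}. In the first coordinate, ${\cal J}_i$ and ${\cal J}_{i+1}$ are disjoint bounded open intervals, so one lies entirely to the left of the other; writing the left one as $(a,b)$ and the right one as $(c,d)$ with $b\le c$, set $i_1=a<i_2=b<i_3=d$, so that the open interval $\mathrm{int}[i_1,i_2]$ is exactly the left one and $\mathrm{int}[i_2,i_3]$ contains the right one. This yields a bijection $\alpha:\{0,1\}\to\{1,2\}$ with ${\cal J}_{i+i'}\subseteq\mathrm{int}[i_{\alpha(i')},i_{\alpha(i')+1}]$ for $i'\in\{0,1\}$. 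Doing the same in the other two coordinates gives $j_1<j_2<j_3$, $k_1<k_2<k_3$ and bijections $\beta,\gamma:\{0,1\}\to\{1,2\}$, and then with the hyperrectangles $H_{\alpha,\beta,\gamma}$ of Lemma~\ref{lem:hyperrectangles_intersection} we have ${\cal J}_{i+i'}\times{\cal J}_{j+j'}\times{\cal J}_{k+k'}\subseteq\mathrm{int}\,H_{\alpha(i'),\beta(j'),\gamma(k')}$ for every $(i',j',k')\in\{0,1\}^3$.

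Finally, suppose for contradiction that all eight triples $(i+i',j+j',k+k')$ are brittle. Since $j+1<i$ and $i+1<k$ give $j+j'<i+i'<k+k'$ for every choice of offsets, Lemma~\ref{lem:brittle_equidistant} applies to each triple (with middle index $i+i'$) and produces a point $q_{i',j',k'}\in{\cal J}_{i+i'}\times{\cal J}_{j+j'}\times{\cal J}_{k+k'}$ lying on $h$. Thus $h$ meets $\mathrm{int}\,H_{\alpha(i'),\beta(j'),\gamma(k')}$ for all eight offsets, and as $\alpha,\beta,\gamma$ are bijections onto $\{1,2\}$ this means $h$ meets the interior of every one of the eight hyperrectangles $H_{\alpha,\beta,\gamma}$, contradicting Lemma~\ref{lem:hyperrectangles_intersection}. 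Hence some triple $(i+i',j+j',k+k')$ is not brittle, which proves the claim. I expect the main obstacle to be purely organizational: matching the eight offset triples to the eight hyperrectangles while respecting the a priori unknown left-to-right order of each pair ${\cal J}_t,{\cal J}_{t+1}$, and checking the index inequalities that let Lemma~\ref{lem:brittle_equidistant} be invoked for every offset; all the genuine geometry is already packaged in Lemmas~\ref{lem:brittle_equidistant} and~\ref{lem:hyperrectangles_intersection}.
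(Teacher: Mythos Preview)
Your proposal is correct and follows essentially the same route as the paper: define the plane $h=\{(x_I,x_J,x_K):x_I-x_J=x_K-x_I\}$, observe via Lemma~\ref{lem:brittle_equidistant} that brittleness of $(i+i',j+j',k+k')$ forces $h$ to meet ${\cal J}_{i+i'}\times{\cal J}_{j+j'}\times{\cal J}_{k+k'}$, and then invoke Lemma~\ref{lem:hyperrectangles_intersection} to conclude that one of the eight boxes is missed. You are in fact more careful than the paper in two places it glosses over: you explicitly construct the grid $i_1<i_2<i_3$ (etc.) and the bijections $\alpha,\beta,\gamma$ needed to match the eight products of ${\cal J}$-intervals to the eight cells $H_{\alpha,\beta,\gamma}$, and you dispose of the degenerate case where one of the six buckets is unstable.
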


\begin{proof}
Define the plane
\[
h = \{(x_I,x_J,x_K)\in \mathbb{R}^3 : x_I-x_J = x_K-x_I\}.
\]
By Lemma \ref{lem:brittle_equidistant}, we have that if $(i+i', j+j', k+k')$ is brittle, then $h$ must intersect the hyperrectangle ${\cal J}_{i+i'}\times {\cal J}_{j+j'} \times {\cal J}_{k+k'}$.
However, by Lemma \ref{lem:hyperrectangles_intersection}, it follows that there exist $i',j',k'\in \{0,1\}$, such that $h$ does not intersect ${\cal J}_{i+i'}\times {\cal J}_{j+j'} \times {\cal J}_{k+k'}$, and thus $(i+i', j+j', k+k')$ is not brittle, which concludes the proof.
\end{proof}

\begin{lemma}[Brittle convexity]\label{lem:brittle_convexity}
Let $\{e_1,e_2,e_3\}$ be the standard orthonormal basis in $\mathbb{R}^3$.
Let $v\in [b-2]^3$,
and let $w\in \{e_1,e_2,e_3\}$, 
such that $v$ and $v+2w$ are both brittle.
Then, $v+w$ is also brittle.
\end{lemma}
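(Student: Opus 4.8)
The plan is to reduce the statement, via Lemma~\ref{lem:brittle_equidistant}, to a purely geometric convexity fact about the plane $h = \{(x_I,x_J,x_K) : x_I - x_J = x_K - x_I\}$ and the boxes ${\cal J}_t$. Write $v = (i,j,k)$ and $w = e_1$ (the cases $w = e_2, e_3$ are symmetric). So we are given that $(i,j,k)$ and $(i+2,j,k)$ are brittle, and we want $(i+1,j,k)$ brittle. First I would recall that a triple $(a,b,c)$ is brittle iff there are two choices of representatives $(u,v,w)$ and $(u',v',w')$ in ${\cal J}_a \times {\cal J}_b \times {\cal J}_c$ with $|u-v| < |u-w|$ and $|u'-v'| > |u'-w'|$; equivalently (when the middle index lies between the two others, which is forced by the argument in Lemma~\ref{lem:brittle_equidistant}) the box ${\cal J}_a \times {\cal J}_b \times {\cal J}_c$ meets both open halfspaces $h^+$ and $h^-$ cut out by $h$. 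So brittleness of $(a,b,c)$ is equivalent to: the box ${\cal J}_a \times {\cal J}_b \times {\cal J}_c$ intersects both sides of $h$, i.e. it is not contained in $\overline{h^+}$ nor in $\overline{h^-}$.

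The key observation is that $h$ is a plane with a very specific normal, namely $(2,-1,-1)$, which has a \emph{negative} coordinate in the $J$ and $K$ slots and a positive coordinate in the $I$ slot. Consequently, the linear functional $\ell(x_I,x_J,x_K) = 2x_I - x_J - x_K$ is monotone increasing in $x_I$. Now consider the three boxes $Q_0 = {\cal J}_i\times{\cal J}_j\times{\cal J}_k$, $Q_1 = {\cal J}_{i+1}\times{\cal J}_j\times{\cal J}_k$, $Q_2 = {\cal J}_{i+2}\times{\cal J}_j\times{\cal J}_k$. Since the intervals ${\cal J}_i, {\cal J}_{i+1}, {\cal J}_{i+2}$ are pairwise disjoint and appear in increasing order (this is exactly the ordering property established just before the definition of $X_{\noise}$: for stable indices the ${\cal J}_t$ are disjoint and inherit the order of the optimal solution), the ${\cal J}_i$-coordinate range of $Q_1$ lies entirely between those of $Q_0$ and $Q_2$. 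Therefore, for the functional $\ell$, the range $\ell(Q_1)$ is an interval sandwiched between $\ell(Q_0)$ and $\ell(Q_2)$ in the sense that $\min \ell(Q_1) \ge \min\ell(Q_0) - c$ and $\max\ell(Q_1) \le \max\ell(Q_2) + c$ — more precisely, since only the first coordinate changes and $\ell$ is increasing in it, $\ell(Q_0), \ell(Q_1), \ell(Q_2)$ are three intervals whose left endpoints are increasing and whose right endpoints are increasing, with $\ell(Q_1)$ "between" the other two.

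Now the argument: $Q_0$ brittle means $\ell(Q_0)$ contains both a value $< \theta$ and a value $> \theta$, where $\theta$ is the constant defining $h$ (take $h = \ell^{-1}(\theta)$); in particular $\min \ell(Q_0) < \theta$. Similarly $Q_2$ brittle gives $\max \ell(Q_2) > \theta$. Since the left endpoints of $\ell(Q_0), \ell(Q_1), \ell(Q_2)$ are nondecreasing and the right endpoints are nondecreasing, and the intervals $\ell(Q_0), \ell(Q_2)$ both straddle $\theta$, we would like to conclude $\ell(Q_1)$ straddles $\theta$ too. The left endpoint of $\ell(Q_1)$ is $\le$ the left endpoint of $\ell(Q_2) < \theta$ (since $Q_2$ straddles $\theta$, its left endpoint is below $\theta$), so $\ell(Q_1)$ contains a value $< \theta$. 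Symmetrically, the right endpoint of $\ell(Q_1)$ is $\ge$ the right endpoint of $\ell(Q_0) > \theta$, so $\ell(Q_1)$ contains a value $> \theta$. Hence $\ell(Q_1)$ straddles $\theta$, i.e. $Q_1$ meets both open halfspaces of $h$. Finally I would translate "meets both open halfspaces" back to brittleness: this direction needs the observation (also used implicitly in Lemma~\ref{lem:brittle_equidistant}) that since ${\cal J}_{i+1}$ lies strictly between ${\cal J}_j$ and ${\cal J}_k$ in the order $j < i+1 < k$, any representative point of ${\cal J}_{i+1}$ can be closer to ${\cal J}_j$ or closer to ${\cal J}_k$ depending on the choice, so the two halfspace conditions correspond exactly to the two strict-inequality witnesses in the definition of brittleness. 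The main obstacle is being careful about the boundary/open-versus-closed issue and about the requirement that the middle index is genuinely between the other two in the ${\cal J}$-ordering — one must verify that the ordering hypothesis $v \in [b-2]^3$ together with $w = e_1$ (resp. $e_2, e_3$) keeps the relevant triple in the "middle index between" configuration, or else handle the degenerate configurations (where ${\cal J}_{i+1}$ is not between ${\cal J}_j, {\cal J}_k$) separately by noting brittleness fails there for $Q_0, Q_1, Q_2$ alike, making the implication vacuous or trivial.
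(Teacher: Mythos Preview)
Your approach is correct and genuinely different from the paper's. The paper extracts, via Lemma~\ref{lem:brittle_equidistant}, equidistant witness points $(p_i,p_j,p_k)\in{\cal J}_i\times{\cal J}_j\times{\cal J}_k$ and $(q_i,q_j,q_k)\in{\cal J}_{i+2}\times{\cal J}_j\times{\cal J}_k$ satisfying $2p_i=p_j+p_k$ and $2q_i=q_j+q_k$, and then linearly interpolates: for $\alpha\in[0,1]$ the convex combination $z^{(\alpha)}=(1-\alpha)p+\alpha q$ stays in ${\cal J}_j\times{\cal J}_k$ in the last two coordinates (intervals are convex), while the first coordinate moves continuously from ${\cal J}_i$ to ${\cal J}_{i+2}$ and hence hits ${\cal J}_{i+1}$ at some $\alpha^*$; the equidistance relation is linear and so is preserved, yielding an equidistant triple in ${\cal J}_{i+1}\times{\cal J}_j\times{\cal J}_k$, which (by the easy converse of Lemma~\ref{lem:brittle_equidistant}) gives brittleness of $(i+1,j,k)$. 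Your argument bypasses the interpolation entirely by working with the range of the linear functional $\ell=2x_I-x_J-x_K$ on the three boxes and using monotonicity of that range in the varying coordinate; this is arguably more direct.

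One simplification you can make: your hedge about the ``middle index between'' configuration is unnecessary. For any triple $(a,b,c)$ with ${\cal J}_b\cap{\cal J}_c=\emptyset$, the function $(u,v,w)\mapsto|u-v|-|u-w|$ on ${\cal J}_a\times{\cal J}_b\times{\cal J}_c$ vanishes exactly where $\ell=0$ and has the same sign as $\pm\ell$ (the sign depending only on whether ${\cal J}_b$ lies left or right of ${\cal J}_c$), so ``brittle'' is equivalent to ``box straddles the plane $\ell=0$'' with no positional caveat on ${\cal J}_a$. This removes the need for your closing case analysis. Both your argument and the paper's implicitly use that the intervals ${\cal J}_t$ appear on the line in the order of $t$ (needed for your monotonicity step and for the paper's intermediate-value step) and that all indices involved are stable so that the ${\cal J}_t$ are defined.
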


\begin{proof}
By Lemma \ref{lem:brittle_equidistant}, there exist $p_i\in {\cal J}_i$, $p_j\in {\cal J}_j$, $p_k\in {\cal J}_k$, such that 
\begin{align}
p_i-p_j = p_k-p_i. \label{eq:equi_p}
\end{align}
Let $w=(i',j',k')$.
Similarly, there exist
$q_i\in {\cal J}_{i+2i'}$, $q_j\in {\cal J}_{j+2j'}$, $q_k\in {\cal J}_{k+2k'}$, such that 
\begin{align}
q_i-q_j = q_k-q_i. \label{eq:equi_q} 
\end{align}
For any $\alpha\in [0,1]$, let
\[
z^{(\alpha)}_i = (1-\alpha)p_i + \alpha q_i,
\]
\[
z^{(\alpha)}_j = (1-\alpha)p_j + \alpha q_j,
\]
\[
z^{(\alpha)}_k = (1-\alpha)p_k + \alpha q_k.
\]

Let us assume that $w=e_1$.
The cases $w=e_2$ and $w=e_3$ can be handled in a similar manner.
We have that for all $\alpha\in [0,1]$, $z_j^{(\alpha)} \in {\cal J}_j$, and $z_k^{(\alpha)} \in {\cal J}_k$.
Moreover, $z_i^{(0)}\in {\cal J}_i$, and $z_i^{(1)}\in {\cal J}_{i+2}$, which implies that there exists some $\alpha^*\in [0,1]$, such that
$z_{i}^{(\alpha^*)}\in {\cal J}_{i+1}$.
We have
\begin{align*}
z_i^{(\alpha^*)} - z_j^{(\alpha^*)} &= (1-\alpha^*)p_i + \alpha^* q_i - (1-\alpha^*)p_j - \alpha^* q_j \\
 &= (1-\alpha^*)(p_i-p_j) + \alpha^* (q_i-q_j) \\
 &= (1-\alpha^*)(p_k-p_i) + \alpha^* (q_k-q_i) \\
 &= (1-\alpha^*)p_k + \alpha^* q_k - (1-\alpha^*)p_i - \alpha^* q_i \\
 &= z_k^{(\alpha^*)} - z_i^{(\alpha^*)},
\end{align*}
which by Lemma \ref{lem:brittle_equidistant} implies that $v+w$ is brittle, and concludes the proof.
\end{proof}

We are now ready to bound the number of brittle triples, which is the main result of this Section.

\begin{lemma}\label{lem:brittle_N}
The number of brittle triples is at most $O(b^2)$.
\end{lemma}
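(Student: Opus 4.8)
The plan is to exploit the structural results established in Lemmas \ref{lem:brittle_cube} and \ref{lem:brittle_convexity}. The key observation is that brittleness behaves like a ``convex'' property along axis-parallel directions: by Lemma \ref{lem:brittle_convexity}, if $v$ and $v+2w$ are brittle for a standard basis vector $w$, then $v+w$ is brittle too. Combined with Lemma \ref{lem:brittle_cube}, which says that every $2\times 2\times 2$ axis-parallel cube of indices (with the three coordinates well-separated) contains at least one non-brittle triple, this should force the brittle triples to be confined to a small region.

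First I would reduce to triples $(i,j,k)$ with $j<i<k$: by Lemma \ref{lem:brittle_equidistant} (or rather the observation in its proof that ${\cal J}_i$ must lie between ${\cal J}_j$ and ${\cal J}_k$), a brittle triple must have its first coordinate ``in the middle'' in the ordering of the buckets, so up to the choice of which coordinate is the middle one, we may assume $j<i<k$. Also, triples with $|i-j|\le 1$ or $|i-k|\le 1$ number only $O(b^2)$, so we may restrict to $j+1<i$ and $i+1<k$. Now I would fix the middle coordinate $i$ and count brittle triples $(i,j,k)$ with $j<i<k$. For each such fixed $i$, consider the set $S_i = \{(j,k) : j<i<k, (i,j,k) \text{ brittle}\}$ as a subset of the grid $[i-1]\times([b]\setminus[i])$. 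By brittle convexity applied in the $e_2$ and $e_3$ directions, $S_i$ is ``staircase-convex'': along any axis-parallel line, the brittle triples form a contiguous interval. But Lemma \ref{lem:brittle_cube} says no $2\times2$ sub-square of this grid can be entirely brittle (since any such square, together with the fixed $i$, would be a full $2\times2\times2$ cube, one corner of which must be non-brittle — using $i+1 < k$ and $j+1<i$). A subset of an $a\times a$ grid that is axis-parallel-convex and contains no full $2\times 2$ square has size $O(a)$: it is essentially contained in a union of one row and one column (more precisely, in each row the brittle set is an interval, and the left endpoints and right endpoints of these intervals are each monotone, so the region is a thin ``band'' of width $O(1)$... actually the cleanest argument: a $2\times 2$-free convex region in a grid lies within a single row or single column after removing $O(1)$ cells). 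Summing $O(b)$ over the $b$ choices of $i$ gives $O(b^2)$.

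The main obstacle I anticipate is making the combinatorial counting step fully rigorous: precisely, showing that a subset of a two-dimensional grid that is convex along both axis directions and contains no $2\times2$ all-ones square has only $O(b)$ elements. The intuition is clear — such a set is ``one-dimensional'' — but one must be careful with the case analysis (e.g., the set could be a union of a horizontal segment and a vertical segment sharing an endpoint, or a single staircase of width one). A clean way is: for each row, the brittle $(j,k)$ with that fixed $j$ form an interval $[\ell_j, r_j]$ in $k$; the no-$2\times2$ condition forces $r_j < \ell_{j+1}$ or the intervals to be singletons or empty in an alternating pattern that still sums to $O(b)$; one shows $\sum_j (r_j - \ell_j + 1) = O(b)$ by observing consecutive nonempty rows overlap in at most one column. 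I would carry this out carefully, then conclude $\sum_{i\in[b]} |S_i| = O(b^2)$, and finally multiply by the constant number ($3$) of ways to designate which coordinate plays the role of the ``middle'' one, yielding the claimed $O(b^2)$ bound on the total number of brittle triples.

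Note that this bound feeds into \eqref{eq:T3} via $|{\cal T}_3|\le O(b^2)\cdot(n/b)^3 = O(n^3/b)$, matching the claim in the proof of Lemma \ref{lem:retraction_cost}.
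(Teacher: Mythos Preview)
Your proposal has a genuine gap at the central combinatorial step. You assert that for fixed $i$, the set $S_i$ contains no full $2\times 2$ square, justifying this by saying that ``any such square, together with the fixed $i$, would be a full $2\times2\times2$ cube, one corner of which must be non-brittle''. But with $i$ held fixed, a $2\times 2$ square in the $(j,k)$-grid yields only the four triples $(i,j+j',k+k')$ for $j',k'\in\{0,1\}$, i.e.\ a $1\times 2\times 2$ slab, not a $2\times 2\times 2$ cube. Lemma~\ref{lem:brittle_cube} only guarantees a non-brittle corner in a genuine $2\times 2\times 2$ cube, and that corner may lie in the slice $i+1$ rather than $i$, so it says nothing about $S_i$.

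Worse, the claim itself is false, not just your justification. From Lemma~\ref{lem:brittle_equidistant} and the proof of Lemma~\ref{lem:brittle_cube}, for $j<i<k$ the triple $(i,j,k)$ is brittle exactly when the plane $h=\{2x_I=x_J+x_K\}$ meets the box ${\cal J}_i\times{\cal J}_j\times{\cal J}_k$. Fixing $i$ and projecting to $(x_J,x_K)$, brittleness becomes: the strip $\{2a_i\le x_J+x_K\le 2b_i\}$ (where ${\cal J}_i=[a_i,b_i]$) meets the rectangle ${\cal J}_j\times{\cal J}_k$. If ${\cal J}_i$ is much longer than the neighbouring intervals, this strip can contain all four rectangles ${\cal J}_{j+j'}\times{\cal J}_{k+k'}$ simultaneously; indeed, taking ${\cal J}_i$ long enough one can make $S_i$ equal to the \emph{entire} $(i-1)\times(b-i)$ grid, so $|S_i|=\Theta(b^2)$ for a single $i$. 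Your per-slice bound $|S_i|=O(b)$ is therefore not available, and the summation over $i$ collapses.

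The paper's proof stays in three dimensions and avoids slicing. From each brittle $v$ it uses Lemma~\ref{lem:brittle_cube} to locate adjacent $u\in B$ and $u'=u+e_t\in B'$ inside the $2\times 2\times 2$ cube at $v$; then Lemma~\ref{lem:brittle_convexity} forces the whole axis-parallel ray $u+e_t,u+2e_t,\ldots$ to lie in $B'$ (if some later point were brittle, $u'$ would be sandwiched between two brittle points and hence brittle). These rays have length $\Omega(b)$ and, after a greedy packing, a constant fraction of them are pairwise disjoint subsets of $[b]^3$, so their number is $O(b^2)$. Equivalently, the bound reflects the elementary fact that a single plane in $\mathbb{R}^3$ meets $O(b^2)$ cells of any $b\times b\times b$ axis-aligned grid --- a global statement that does not decompose into $O(b)$ per slice.
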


\begin{proof}
Let 
$B\subseteq [b]^3$
be the set of all brittle 
triples, and let $B'=[b]^3\setminus B$.
For any $s\in \{0,1\}^3$, let
\[
U_s = s\cdot  b/2 + [b/2]^3,
\]
and $B_s = B\cap U_s$.
Since $B=\bigcup_{s} B_s$, and there are only 8 different values for $s$, it suffices to show that for any $s\in \{0,1\}^3$, $|B_s| = O(b^2)$.
We shall prove this for the case $s=(0,0,0)$. All remaining cases can be handled in a similar manner.

For the remainder for the proof, let $s=(0,0,0)$.
By Lemma \ref{lem:brittle_cube}, it follows that for any $v\in B_s^3$, there exists $v'\in B'$, with $v'-v\in \{0,1\}^3$.
This implies that there exists $u\in B$, and $u'\in B'$, with $u-v\in \{0,1\}^3$, $u'-v\in \{0,1\}^3$, and $u'-u\in \{e_1,e_2,e_3\}$, where $\{e_1,e_2,e_3\}$ is the standard orthonormal basis in $\mathbb{R}^3$.
Let $t=u'-u$.
By Lemma \ref{lem:brittle_convexity}, it follows by induction that for any $i\in \{1,\ldots,b/2\}$, the triple $u+i\cdot t$ is brittle.
Let
\[
R_v = \bigcup_{i=1}^{b/2} \{u+c\cdot i\}.
\]
Thus $R_v\subseteq B'$.
Note that, since $s=(0,0,0)$, we have
\begin{align}
|R_v| \geq b/2. \label{eq:Rv_size}
\end{align}
For any $j\in \{1,2,3\}$, we say that $v$ is \emph{type-$j$}, if $t=e_j$.

Let
\[
B_{s,j} = \{v\in B_s : v \text{ is type-}j\}.
\]
Let $j^*\in \{1,2,3\}$, such that
$|B_{s,j^*}| \geq |B_s|/3$.

By the above construction, it follows that for any $v,w\in B_{s,j^*}$, with $\|v-w\|_{\infty}\geq 2$, we have
$R_v\cap R_w = \emptyset$.
We greedily construct some $C\subseteq B_{s,j^*}$ as follows.
We start with $C:=\emptyset$, and $D:=B_{s,j^*}$.
While $D\neq \emptyset$, we pick any $v\in D$, and we set $C:=C\cup \{v\}$, and $D:=D\setminus \ball_{\infty}(v,1)$, where $\ball_{\infty}(v,r)$ denotes the $\ell_\infty$-ball of radius $r$ centered at $v$.
For every $v$ added to $C$, we delete at most $9$ elements from $D$, and thus 
\[
|C|\geq |B_{s,j^*}|/9 \geq |B_s|/27.
\]
Since for any $v,w\in C$, we have $\|v-w\|_\infty$, it follows that $R_v\cap R_w=\emptyset$.
Combining with \eqref{eq:Rv_size}, we get
\[
b^3 \geq |B'| \geq \left| \bigcup_{v\in C} R_v \right| = \sum_{v\in C} |R_v| \geq |C|\cdot b/2 \geq |B_s| \cdot b / 54,
\]
and thus $|B_s|\leq 54 b^2$, which concludes the proof.
\end{proof}

\bibliographystyle{alpha}
\bibliography{bibfile}

\newcommand{\etalchar}[1]{$^{#1}$}
\begin{thebibliography}{BDG{\etalchar{+}}05}

\bibitem[AA07]{ailon2007hardness}
Nir Ailon and Noga Alon.
\newblock Hardness of fully dense problems.
\newblock {\em Information and Computation}, 205(8):1117--1129, 2007.

\bibitem[ABD{\etalchar{+}}08]{alon2008ordinal}
Noga Alon, Mihai B{\u{a}}doiu, Erik~D Demaine, Martin Farach-Colton,
  MohammadTaghi Hajiaghayi, and Anastasios Sidiropoulos.
\newblock Ordinal embeddings of minimum relaxation: general properties, trees,
  and ultrametrics.
\newblock {\em ACM Transactions on Algorithms (TALG)}, 4(4):1--21, 2008.

\bibitem[Bad03]{badoiu2003approximation}
Mihai Badoiu.
\newblock Approximation algorithm for embedding metrics into a two-dimensional
  space.
\newblock In {\em Proceedings of the fourteenth annual ACM-SIAM symposium on
  Discrete algorithms}, pages 434--443. Society for Industrial and Applied
  Mathematics, 2003.

\bibitem[BCIS05]{badoiu2005low}
Mihai Bǎdoiu, Julia Chuzhoy, Piotr Indyk, and Anastasios Sidiropoulos.
\newblock Low-distortion embeddings of general metrics into the line.
\newblock In {\em Proceedings of the thirty-seventh annual ACM symposium on
  Theory of computing}, pages 225--233. ACM, 2005.

\bibitem[BDG{\etalchar{+}}05]{badoiu2005approximation}
Mihai Badoiu, Kedar Dhamdhere, Anupam Gupta, Yuri Rabinovich, Harald R{\"a}cke,
  Ramamoorthi Ravi, and Anastasios Sidiropoulos.
\newblock Approximation algorithms for low-distortion embeddings into
  low-dimensional spaces.
\newblock In {\em SODA}, volume~5, pages 119--128. Citeseer, 2005.

\bibitem[BDH{\etalchar{+}}08]{buadoiu2008ordinal}
Mihai B{\u{a}}doiu, Erik~D Demaine, MohammadTaghi Hajiaghayi, Anastasios
  Sidiropoulos, and Morteza Zadimoghaddam.
\newblock Ordinal embedding: Approximation algorithms and dimensionality
  reduction.
\newblock In {\em Approximation, Randomization and Combinatorial Optimization.
  Algorithms and Techniques}, pages 21--34. Springer, 2008.

\bibitem[CFL{\etalchar{+}}18]{carpenter2018algorithms}
Timothy Carpenter, Fedor~V Fomin, Daniel Lokshtanov, Saket Saurabh, and
  Anastasios Sidiropoulos.
\newblock Algorithms for low-distortion embeddings into arbitrary 1-dimensional
  spaces.
\newblock In {\em 34th International Symposium on Computational Geometry (SoCG
  2018)}. Schloss Dagstuhl-Leibniz-Zentrum fuer Informatik, 2018.

\bibitem[CGM09]{charikar2009every}
Moses Charikar, Venkatesan Guruswami, and Rajsekar Manokaran.
\newblock Every permutation csp of arity 3 is approximation resistant.
\newblock In {\em 2009 24th Annual IEEE Conference on Computational
  Complexity}, pages 62--73. IEEE, 2009.

\bibitem[CS98]{chor1998geometric}
Benny Chor and Madhu Sudan.
\newblock A geometric approach to betweenness.
\newblock {\em SIAM Journal on Discrete Mathematics}, 11(4):511--523, 1998.

\bibitem[DGR06]{DBLP:journals/mst/DhamdhereGR06}
Kedar Dhamdhere, Anupam Gupta, and R.~Ravi.
\newblock Approximation algorithms for minimizing average distortion.
\newblock {\em Theory Comput. Syst.}, 39(1):93--111, 2006.

\bibitem[FFL{\etalchar{+}}09]{fellows2009distortion}
Michael~R Fellows, Fedor~V Fomin, Daniel Lokshtanov, Elena Losievskaja,
  Frances~A Rosamond, and Saket Saurabh.
\newblock Distortion is fixed parameter tractable.
\newblock In {\em International Colloquium on Automata, Languages, and
  Programming}, pages 463--474. Springer, 2009.

\bibitem[IMS17]{DBLP:reference/cg/IndykMS17}
Piotr Indyk, Ji{\v{r}}{\'{\i}} Matou{\v{s}}ek, and Anastasios Sidiropoulos.
\newblock Low-distortion embeddings of finite metric spaces.
\newblock In Jacob~E. Goodman, Joseph O'Rourke, and Csaba~D. Toth, editors,
  {\em Handbook of Discrete and Computational Geometry, Second Edition.}
  Chapman and Hall/CRC, 2017.

\bibitem[IMS19]{ihara2019algorithms}
Diego Ihara, Neshat Mohammadi, and Anastasios Sidiropoulos.
\newblock Algorithms for metric learning via contrastive embeddings.
\newblock In {\em 35th International Symposium on Computational Geometry (SoCG
  2019)}. Schloss Dagstuhl-Leibniz-Zentrum fuer Informatik, 2019.

\bibitem[IMSS19]{maha}
Diego Ihara, Neshat Mohammadi, Francesco Sgherzi, and Anastasios Sidiropoulos.
\newblock Learning mahalanobis metric spaces via geometric approximation
  algorithms.
\newblock {\em CoRR}, abs/1905.09989, 2019.

\bibitem[K{\etalchar{+}}13]{kulis2013metric}
Brian Kulis et~al.
\newblock Metric learning: A survey.
\newblock {\em Foundations and Trends{\textregistered} in Machine Learning},
  5(4):287--364, 2013.

\bibitem[KMS07]{kenyon2007rank}
Claire Kenyon-Mathieu and Warren Schudy.
\newblock How to rank with few errors.
\newblock In {\em Proceedings of the thirty-ninth annual ACM symposium on
  Theory of computing}, pages 95--103, 2007.

\bibitem[KS11]{karpinski2011approximation}
Marek Karpinski and Warren Schudy.
\newblock Approximation schemes for the betweenness problem in tournaments and
  related ranking problems.
\newblock In {\em Approximation, Randomization, and Combinatorial Optimization.
  Algorithms and Techniques}, pages 277--288. Springer, 2011.

\bibitem[Mak12]{makarychev2012simple}
Yury Makarychev.
\newblock Simple linear time approximation algorithm for betweenness.
\newblock {\em Operations research letters}, 40(6):450--452, 2012.

\bibitem[NR15]{nayyeri2015reality}
Amir Nayyeri and Benjamin Raichel.
\newblock Reality distortion: Exact and approximate algorithms for embedding
  into the line.
\newblock In {\em Foundations of Computer Science (FOCS), 2015 IEEE 56th Annual
  Symposium on}, pages 729--747. IEEE, 2015.

\bibitem[NR17]{DBLP:conf/soda/NayyeriR17}
Amir Nayyeri and Benjamin Raichel.
\newblock A treehouse with custom windows: Minimum distortion embeddings into
  bounded treewidth graphs.
\newblock In Philip~N. Klein, editor, {\em Proceedings of the Twenty-Eighth
  Annual {ACM-SIAM} Symposium on Discrete Algorithms, {SODA} 2017, Barcelona,
  Spain, Hotel Porta Fira, January 16-19}, pages 724--736. {SIAM}, 2017.

\bibitem[Opa79]{opatrny1979total}
Jaroslav Opatrny.
\newblock Total ordering problem.
\newblock {\em SIAM Journal on Computing}, 8(1):111--114, 1979.

\bibitem[Rab03]{rabinovich2003average}
Yuri Rabinovich.
\newblock On average distortion of embedding metrics into the line and into l1.
\newblock In {\em Proceedings of the thirty-fifth annual ACM symposium on
  Theory of computing}, pages 456--462, 2003.

\bibitem[Sha05]{shakhnarovich2005learning}
Gregory Shakhnarovich.
\newblock {\em Learning task-specific similarity}.
\newblock PhD thesis, Massachusetts Institute of Technology, 2005.

\bibitem[TOG17]{toth2017handbook}
Csaba~D Toth, Joseph O'Rourke, and Jacob~E Goodman.
\newblock {\em Handbook of discrete and computational geometry}.
\newblock Chapman and Hall/CRC, 2017.

\end{thebibliography}

\end{document}